\pgfplotsset{compat=1.14}
\tikzstyle{vertex}=[circle, draw, inner sep=0pt, minimum size=6pt]
\tikzstyle{vertbox}=[draw, inner sep=0pt, minimum size=8pt]
\newcommand{\vertex}{\node[vertex]}
\newcommand{\oset}[3][0ex]{%
	\mathrel{\mathop{#3}\limits^{
			\vbox to#1{\kern-2\ex@
				\hbox{$\scriptstyle#2$}\vss}}}}
\newcommand{\cA}{\mathcal{A}}
\newcommand{\cI}{\mathcal{I}}
\newcommand{\cU}{\mathcal{U}}
\newcommand{\fI}{\mathfrak{I}}
\newcommand{\fP}{\mathfrak{P}}
\newcommand{\fX}{\mathfrak{X}}
\newcommand{\row}{\mathfrak{r}}
\newcommand{\pseudo}{\mathfrak{p}}
\newcommand{\lefty}{\mathsf{left}}
\newcommand{\righty}{\mathsf{right}}
\newcommand{\Gright}{G_{\operatorname{right}}}
\newcommand{\Gnew}{G_{\operatorname{new}}}
\newcommand{\Gdiag}{G_{\operatorname{diag}}}
\newcommand{\GBFS}{G_{\operatorname{BFS}}}
\newcommand{\Gset}{G_{\operatorname{set}}}
\newcommand{\xin}{x_{\operatorname{in}}}
\newcommand{\xout}{x_{\operatorname{out}}}
\newcommand{\xiin}{x_{i,{\operatorname{in}}}}
\newcommand{\xiout}{x_{i,{\operatorname{out}}}}
\newcommand{\inn}{\mathsf{in}}
\newcommand{\out}{\mathsf{out}}
\newcommand{\xprev}{x^{\operatorname{pre}}}
\newcommand{\sssp}{\operatorname{\mathbf{sssp}}}
\newcommand{\optsol}{\operatorname{OPT}}
\newcommand{\greedyG}{P_G^{\operatorname{gr}}}
\newcommand{\greedyX}{P_X^{\operatorname{gr}}}
\newcommand{\greedyY}{P_Y^{\operatorname{gr}}}
\newcommand{\greedyI}{P_I^{\operatorname{gr}}}
\newcommand{\east}{P_X^{\operatorname{E}}}
\newcommand{\west}{P_X^{\operatorname{W}}}
\newcommand{\north}{P_Y^{\operatorname{N}}}
\newcommand{\neast}{P_G^{\operatorname{NE}}}
\newcommand{\nwest}{P_G^{\operatorname{NW}}}
\newcommand{\myparagraph}[1]{\noindent\textbf{#1\ }}
\declaretheorem[numberlike=equation]{Theorem}
\declaretheorem[numberlike=equation]{Lemma}
\declaretheoremstyle[bodyfont=\it,qed=$\lozenge$]{defstyle} 
\declaretheorem[numberlike=equation,style=defstyle]{Definition}
\declaretheorem[numberlike=equation]{Claim}
\title{Minimizing Branching Vertices in Distance-preserving Subgraphs}
\author{Kshitij Gajjar\thanks{\texttt{kshitij.gajjar@tifr.res.in}}~ and Jaikumar Radhakrishnan\thanks{\texttt{jaikumar@tifr.res.in}}}
\affil{Tata Institute of Fundamental Research, Mumbai}
\begin{document}
	
	\maketitle

	
	

	\begin{abstract}
		
		It is $\NP$-hard to determine the minimum number of branching vertices needed in a single-source distance-preserving subgraph of an undirected graph. We show that this problem can be solved in polynomial time if the input graph is an interval graph.

		In earlier work, it was shown that every interval graph with $k$ terminal vertices admits an all-pairs distance-preserving subgraph with $O(k\log k)$ branching vertices~\cite{jaikumar}. We consider graphs that can be expressed as the strong product of two interval graphs, and present a polynomial time algorithm that takes such a graph with $k$ terminals as input, and outputs an all-pairs distance-preserving subgraph of it with $O(k^2)$ branching vertices. This bound is tight.
	\end{abstract}

	
	

	\section{Introduction}
	
	Distance-preserving minors were introduced by Krauthgamer and Zondiner~\cite{KraZon}. They showed that every undirected graph with $k$ terminals admits a distance-preserving minor with $O(k^4)$ vertices and edges, and presented a planar graph with $k$ terminals for which every distance-preserving minor has $\Omega(k^2)$ vertices.
	
	Distance-preserving subgraphs are closely related to distance-preserving minors. In fact, for many graph classes, distance-preserving minors are obtained by first constructing distance-preserving subgraphs and contracting edges adjacent to vertices of degree $2$.
	
	In such cases, the size of the distance-preserving minor depends solely on the number of vertices of degree $3$ or more in the distance-preserving subgraph. We call such vertices \emph{branching vertices}. It is therefore natural to minimize the number of branching vertices while constructing distance-preserving subgraphs.
	
	Gajjar and Radhakrishnan~\cite{jaikumar} consider the problem of constructing distance-preserving subgraphs of interval graphs and showed that every interval graph with $k$ terminals admits a distance-preserving subgraph with $O(k\log k)$ branching vertices. They also showed that this bound is tight. We later present a setting (borrowed from~\cite[Section 1.2]{jaikumar}) in which interval graphs arise naturally for this problem.
	
	In this paper, we consider the algorithmic question of finding distance-preserving subgraphs with the fewest branching vertices. This problem was already shown to be $\NP$-complete for general graphs~\cite{jaikumar}. We observe that the same proof shows that even the single-source version of the problem is $\NP$-complete for general graphs (see~\autoref{nphard}). We show the following result.
	
	\begin{Theorem} [Single-source interval graphs] \label{main1}
	There exists a polynomial time algorithm that, given an interval graph $G$ with a source $s$ and $k$ terminals as input, constructs a shortest path tree for $G$ with the minimum number of branching vertices.
	\end{Theorem}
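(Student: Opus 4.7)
The plan is to exploit the geometric structure of BFS in interval graphs. I would fix an interval representation of $G$, compute BFS distances from $s$ to obtain layers $L_0=\{s\}, L_1,\ldots,L_D$, and then set up a dynamic program over these layers. The key structural fact is that adjacency between consecutive layers respects endpoint ordering: a vertex $v\in L_{j-1}$ is an admissible parent of $u\in L_j$ iff their intervals intersect, which, after normalizing within each layer, reduces to the condition that the right endpoint of $v$'s interval is at least the left endpoint of $u$'s. Hence the set of admissible parents of any $u\in L_j$ forms a suffix of $L_{j-1}$ under the right-endpoint ordering.

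I would encode a shortest path tree by the set $S_j\subseteq L_j$ of vertices it contains at each layer, together with a parent function $\pi_j: S_{j+1}\to S_j$ assigning to each vertex an admissible parent. A vertex $v\in S_j$ (for $j\ge 1$) is branching iff $|\pi_j^{-1}(v)|\ge 2$, and $s$ is branching iff $|S_1|\ge 3$. The DP processes layers bottom-up: it begins with $S_D$ forced to contain the deepest terminals, and at each step selects a minimum-cost set $S_{j-1}$ of parents covering $S_j$ together with the assignment $\pi_{j-1}$, where the cost accounts only for newly created branching vertices at layer $j-1$. A standard forward pass then recovers an actual optimal tree.

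Polynomial running time hinges on an exchange/canonical-form lemma: some optimum admits a representation in which each $S_j$ is a contiguous block in the right-endpoint ordering of $L_j$, and each $\pi_j$ is monotone (the parent of $u$ is non-decreasing in the left endpoint of $u$). Under such a canonical form, the DP state at layer $j$ collapses to a constant number of integer parameters (for instance the two endpoints of the block and a threshold recording where branching begins), so the DP has polynomially many states and transitions, yielding an overall polynomial algorithm.

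The main obstacle is proving the canonical-form lemma. The core exchange argument is this: if $S_j$ omits a vertex $v'\in L_j$ whose right endpoint dominates that of some used vertex $v\in S_j$, then replacing $v$ by $v'$ can only enlarge both the set of eligible children in $L_{j+1}$ and the set of eligible parents in $L_{j-1}$, so all relevant assignments can be re-routed through $v'$ without increasing the branching count. The subtlety is that such swaps must be performed consistently across many layers, and one must show that the cascading re-routings inflate the branching count nowhere. Carefully orchestrating these cascades, together with separately handling the mild complication that some intervals contain $s$'s interval (so that no clean left/right decomposition of the instance is possible a priori), is where the proof will concentrate its technical effort.
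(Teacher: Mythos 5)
Your starting point (BFS layering, ordering each layer by right endpoints, nested forward neighbourhoods, dynamic programming) matches the paper's, but the step that would make your DP polynomial --- the canonical-form lemma --- is where the whole proof lives, and the exchange argument you offer in its support is incorrect. Replacing a used vertex $v\in S_j$ by an unused $v'$ with a larger right endpoint does enlarge the set of eligible children in $L_{j+1}$ (forward neighbourhoods are nested by right endpoint), but it can \emph{shrink} the set of eligible parents in $L_{j-1}$: a vertex $u\in L_{j-1}$ is an admissible parent of $v'$ iff $\righty(u)\geq \lefty(v')$, and $\lefty(v')$ is not controlled by $\righty(v')$, since intervals within one BFS layer may be nested or disjoint, so the left- and right-endpoint orders of a layer disagree in general. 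Hence the cascading re-routing can fail on the parent side, and neither the monotonicity of $\pi_j$ nor the contiguity of $S_j$ follows. The contiguity claim is also in tension with the constraint $S_j\supseteq T\cap L_j$: terminals in a layer need not be contiguous, and padding $S_j$ up to a block creates non-terminal leaves whose parent edges can themselves raise the branching count.

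There is a second, structural gap: even granting some canonical form, it is not clear that a constant number of integers per layer suffices as DP state. What must be tracked across the stretch of layers between two consecutive branching vertices is whether a family of internally vertex-disjoint paths from the current hub can be threaded through those layers so as to hit every intermediate terminal; this is a Menger/Hall-type feasibility condition, not a local one. The paper sidesteps both difficulties with a different decomposition: it shows that some optimum splits, at each branching vertex $v$, into a \emph{cover} (a union of paths sharing only $v$, reaching all terminals up to some layer plus one designated vertex $w$ of that layer) followed by an optimal solution rooted at $w$ alone --- the key trick being that all edges leaving $w$'s layer may be rerouted through the \emph{topmost} solution vertex of that layer without creating new branching vertices. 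The DP is then indexed by single vertices, $D[v]=\min\bigl\{D_1[v],\,1+\min_{w:\,v\text{ covers }w}D[w]\bigr\}$, and the predicate ``$v$ covers $w$'' is decided by a polynomial-time vertex-disjoint-paths (max-flow) computation rather than by any per-layer canonical form. To rescue your layer-by-layer DP you would need either a much more carefully stated canonical form that accounts for left endpoints, or a flow-type feasibility check folded into the transitions --- at which point you have essentially reconstructed the paper's argument.
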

	
	As stated above, the upper bound for interval graphs is significantly better than the lower bound for general graphs. We ask if similar better upper bounds can be established for super-classes of interval graphs. Since interval graphs are precisely the graphs of boxicity one, a natural candidate is graphs of boxicity two (intersection graphs of axis-parallel rectangles). However, it can be shown that there are graphs of boxicity two with $k$ terminals that require $\Omega(k^4)$ branching vertices in any distance-preserving subgraph (see~\autoref{boxicity}).
	
	A sub-class of graphs of boxicity two is bi-interval graphs. These are intersection graphs of axis-parallel rectangles, where the rectangles arise from the cross product of two families of intervals (see~\autoref{def:biinterval}).
	
	\begin{Theorem} [All-pairs bi-interval graphs] \label{main2}\ 
		\begin{enumerate}
			\item[(a)] (Upper bound) There is a polynomial time algorithm that, given a bi-interval graph with $k$ terminals as input, produces a distance-preserving subgraph of it with $O(k^2)$ branching vertices.
			\item[(b)] (Lower bound) For every $k\geq 4$, there is a bi-interval graph $\Gdiag$ on $k$ terminals such that every distance-preserving subgraph of $\Gdiag$ has $\Omega(k^2)$ branching vertices.
		\end{enumerate}
	\end{Theorem}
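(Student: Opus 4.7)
The plan exploits that a bi-interval graph is a strong product $G = G_1 \boxtimes G_2$ of interval graphs, so distances factor as $d_G((a_1,a_2),(b_1,b_2)) = \max(d_{G_1}(a_1,b_1), d_{G_2}(a_2,b_2))$ and every shortest path in $G$ projects to a shortest path in the ``dominant'' factor together with a length-matched walk in the other. For the upper bound (a), I would first project the $k$ terminals onto the two factor interval graphs. Then for every ordered pair of terminals $(s,t)$ I would define a canonical shortest path $\pi_{s,t}$ in $G$ as the coordinate-wise pairing of the greedy shortest path from $\pi_1(s)$ to $\pi_1(t)$ in $G_1$ (in the sense used in~\cite{jaikumar}) with a length-matched greedy walk from $\pi_2(s)$ to $\pi_2(t)$ in $G_2$. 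Let $H$ be the union of the $\binom{k}{2}$ paths so obtained; it is distance-preserving for $T$ by construction, and computable in polynomial time since greedy paths in interval graphs are.

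To bound the branching vertices of $H$, fix a source $s \in T$: the subfamily $\{\pi_{s,t} : t \in T\}$ forms a shortest-path tree rooted at $s$ with at most $k-1$ leaves, hence at most $k-2$ branching vertices. Summed over the $k$ sources this gives $O(k^2)$, but a naive union can also create additional degree-$3$ vertices at union points where edges from different source-trees meet. The key step will be a consistency lemma: any two greedy paths that share an edge continue identically afterwards, so every branching vertex of $H$ is already branching in some single source-tree, closing the count at $O(k^2)$.

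For part (b), I would construct $\Gdiag$ as a king grid---the strong product of two paths of length $\Theta(k)$---with the $k$ terminals placed on the main diagonal, then augment it (while remaining bi-interval) with small pair-specific detour gadgets that force the shortest path between $t_i$ and $t_j$ to traverse a branching cell dedicated to the unordered pair $\{i,j\}$. Since there are $\binom{k}{2}$ disjoint gadgets, each contributing at least one forced branching vertex, any distance-preserving subgraph of $\Gdiag$ must contain $\Omega(k^2)$ of them.

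The main technical hurdle will be the consistency lemma in part (a). Without it, a naive union of $k$ shortest-path trees with $O(k)$ branching vertices each can create extra branching vertices at up to $\Omega(k)$ union points per pair of trees, potentially pushing the count well beyond $O(k^2)$. Transporting the rigidity of greedy paths from a single interval graph to the strong product, and coordinating the source-trees globally, is where the interval hypothesis on the factors has to be fully exploited; it is also what makes the final bound $O(k^2)$ rather than the weaker $O(k^2 \log k)$ that a direct lifting of the all-pairs interval-graph result from~\cite{jaikumar} would suggest.
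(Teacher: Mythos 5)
Your part~(a) hinges on the ``consistency lemma,'' and that is exactly where the argument breaks. The lemma as you state it (two greedy paths sharing an \emph{edge} continue identically) only controls pairs of greedy paths travelling in the \emph{same} direction; but taking the union of canonical paths over all ordered pairs of terminals unavoidably mixes eastward- and westward-oriented greedy paths (and, in the product, \texttt{northeast} versus \texttt{southwest}). Two such anti-parallel greedy paths can intersect in $\Omega(n)$ vertices without ever sharing an edge, and each intersection point has degree $4$ in the union, hence is a branching vertex. The paper exhibits precisely this failure mode in~\autoref{antiparallel}: an interval graph with a constant number of terminals in which one \texttt{east} path and one \texttt{west} path already create $\Omega(n)$ branching vertices, unbounded in $k$. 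So the per-source-tree count of $O(k)$ plus your consistency lemma does not close the bound. The paper's construction is organized specifically to dodge this: every pair is routed from its \emph{lower} terminal, so only \texttt{northeast} and \texttt{northwest} diagonal segments are ever used (any two of which meet in $O(1)$ places, giving $O(k^2)$ for the diagonal phase), and the remaining straight completions are delegated to a separate per-row/per-column interval-graph problem with $p$ terminals and $q=O(k)$ ``pseudo-terminals,'' solved with $O(p(p+q))$ branching vertices via a shortest-path tree plus single patch edges --- not by a union of source trees. Your projection identity $d_G=\max(d_{G_1},d_{G_2})$ and the greedy lifting are correct and match the paper, but without the directional restriction the $O(k^2)$ count does not follow.

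Part~(b) also has a gap. With all $k$ terminals on the main diagonal of a king grid, the unique forced shortest paths all lie on that one diagonal, which pins down only $O(k)$ vertices; the $\Omega(k^2)$ burden is then carried entirely by the ``pair-specific detour gadgets,'' which you do not construct, and it is not clear such gadgets can be realized while keeping the graph a strong product of two interval graphs (bi-interval graphs are quite rigid: the vertex set must be a full grid $\mathcal{I}_X\times\mathcal{I}_Y$). The paper needs no gadgets: it places the terminals on the \emph{boundary} black squares of a $k\times k$ king's graph, so that each pair of terminals on a common black diagonal forces that entire diagonal (it is the unique shortest path between them), and the $\Theta(k)$ forced diagonals of the two orientations cross at $\Omega(k^2)$ interior black cells, each of which has degree at least $4$ and is therefore a branching vertex.
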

	
	\subsection{Related Work}
	
	The problem of constructing small distance-preserving subgraphs bears close resemblance to several well-studied problems in graph algorithms: graph compression~\cite{Feder}, graph spanners~\cite{Bodwin, Coppersmith, Pel}, Steiner point removal~\cite{Filtser, AnupamGupta, Kamma}, graph homeomorphisms~\cite{fortune, lapaugh}, graph contractions~\cite{Daub}, graph sparsification~\cite{goranci, spielman}, etc.
	
	Note that there are several different notions of distance-preserving subgraphs. Our notion of distance-preserving subgraphs is different from that used by Djokovi\'c~\cite{djok} (see also~\cite{chepoi}), Nussbaum \emph{et al.}~\cite{nussbaum}, Yan \emph{et al.}~\cite{yan}, or Sadri \emph{et al.}~\cite{sadri}.
	
	\subsection{Our Techniques}
	
	In this section, we explain the main ideas behind our algorithms and proofs.
	
	Our first result concerns the construction of single-source distance-preserving subgraphs in interval graphs with the minimum number of branching vertices. Constructing a single-source shortest path tree in an interval graph is straightforward: the simple greedy algorithm does the job. However, this method is not guaranteed to produce a tree with the \emph{minimum} number of branching vertices. We do not know of a simple modification of the greedy method that helps us minimize the number of branching vertices. We instead proceed as follows. Using a breadth-first search, we partition the graph into layers based on their distance from the source vertex. Within a layer, we order vertices based on the position of the end points of their respective intervals. We observe that with respect to this ordering an optimal solution can be assumed to have a very special structure: some vertices are directly connected by paths to the source, and the others via a single special vertex. Once this structure is established, a combination of network flows and dynamic programming allows us to determine the optimal solution. Since finding a single-source shortest path tree in interval graphs is simple, it is not clear why we need a somewhat sophisticated solution if we need to minimize the number of branching vertices. We describe our solution in detail (see~\autoref{main1again}) assuming that the source vertex is the leftmost interval. This case already contains most of the ideas; the algorithm for the general case requires some more analysis.
	
    Our second result concerns a 2D generalization of interval graphs. In earlier work~\cite{jaikumar}, the worst-case number of branching vertices for distance-preserving subgraphs of interval graphs with $k$ terminals was determined to be $\Theta(k \log k)$. In particular, the lower bound of $\Omega(k \log k)$ was obtained by considering certain interval graphs with regularly placed intervals. A natural 2D generalization of such graphs would be regularly placed rectangles, a special case of bi-interval graphs (\autoref{king}). We present an example (\autoref{main2bagain}) where such graphs with $k$ terminals require $\Omega(k^2)$ branching vertices. To show that this is tight, we observe that shortest paths in such graphs can be constructed with two components: (i) ones that proceed \emph{diagonally}, (ii) ones that proceed \emph{horizontally} or \emph{vertically} (we call such paths \emph{straight} paths). We first generate the diagonal paths from all terminals; with some care one can argue that these paths meet only at $O(k^2)$ vertices. However, we need to complete these partial paths by adding straight segments. We observe that this second phase of our plan can be mapped to a certain problem for constructing distance-preserving subgraphs in interval graphs. By borrowing some ideas from earlier work~\cite{jaikumar}, we show how these problems can be solved efficiently in such a way that the total number of branching vertices remains $O(k^2)$.

	\subsection{An Example} \label{sec:shipping}
	
	The following example (taken from~\cite{jaikumar}) illustrates a setting where distance-preserving subgraphs of interval graphs arise naturally when considering a shipping problem.
	
	 A freight container needs to be delivered from  seaport $X$ to seaport $Y$, but there is no ship that travels from $X$ to $Y$. In such cases, the container is typically first transported from port $X$ to a central hub $H$, and transferred through a series of ships arriving at $H$ until it is finally picked up by a ship that is destined for port $Y$. Thus, the container reaches its final destination via some ``intermediate'' ships at port $H$~\footnote{The container cannot be left at the warehouse/storage unit of port $H$ itself beyond a certain limited period of time.}.
	
	However, there is a cost associated with transferring the container from one ship to another. There is also an added cost if an intermediate ship receives containers from multiple ships, or sends containers to multiple ships. Thus, given the docking times of ships at $H$, and a small subset of these ships that require a transfer of containers between each other, our goal is to devise a transfer strategy that meets the following objectives: (i) Minimize the number of transfers for each container; (ii) Minimize the number of ships that have to deal with multiple transfers. 
	
	Representing each ship's visit to the port as an interval on the time line, this problem can be modelled using distance-preserving subgraphs of interval graphs. In this setting, a shortest path from an earlier interval to a later interval corresponds to a valid sequence of transfers across ships that moves forward in time. Objective (i) corresponds to minimizing pairwise distances between terminals, and objective (ii) corresponds to minimizing the number of branching vertices in the subgraph.
	
	In previous work~\cite{jaikumar}, it was shown that $O(k \log k)$ branching vertices are sufficient for preserving pairwise distances between $k$ terminals, and that this is the best possible bound. In the next section, we focus on the problem of determining this number exactly. For general graphs this problem is $\NP$-complete, even when restricted to one source. We do not know if the all-pairs version of the problem is $\NP$-complete for interval graphs. We show that the problem can be solved efficiently for interval graphs when there is only one source.

	
	

    \section{Single-source Distance-preserving Subgraphs}
    
    \begin{figure}
    \begin{center}
    \begin{tikzpicture} [thick,every node/.style={},every fit/.style={ellipse,draw,inner sep=-12pt,text width=2cm}]

    \begin{scope}[xshift=1cm,yshift=-0.5cm,start chain=going below,node distance=7mm]
    \foreach \i in {6,7,...,9}
        \node[on chain] (d1\i) [] {};
    \end{scope}

    \begin{scope}[xshift=3.5cm,yshift=-0.5cm,start chain=going below,node distance=7mm]
    \foreach \i in {6,7,...,9}
        \node[on chain] (d2\i) [] {};
    \end{scope}

    \begin{scope}[xshift=6cm,yshift=-0.5cm,start chain=going below,node distance=7mm]
    \foreach \i in {6,7,...,9}
        \node[on chain] (d3\i) [] {};
    \end{scope}

    \begin{scope}[xshift=8.5cm,yshift=-0.5cm,start chain=going below,node distance=7mm]
    \foreach \i in {6,7,...,9}
        \node[on chain] (d4\i) [] {};
    \end{scope}

    \node [blue!25,fit=(d16) (d19),label=above:$R_1$] {};
    \node [blue!25,fit=(d26) (d29),label=above:$R_2$] {};
    \node [blue!25,fit=(d36) (d39),label=above:$R_3$] {};
    \node [blue!25,fit=(d46) (d49),label=above:$R_4$] {};

    \vertex(aa) at (-1,-2) [label=left:$s$, blue, fill=blue] {};
    \draw[ultra thin](aa)--(1,-1.85);
    \draw[ultra thin](aa)--(1,-3.10);
    \draw[ultra thin](aa)--(1,-1.30);
    \draw[ultra thin](aa)--(1,-0.75);
    
    \node[] at (2.25,-2.45) {$\cdots$};
    \node[] at (4.75,-2.45) {$\cdots$};
    \node[] at (7.25,-2.45) {$\cdots$};
    
    \node[] at (2.25,-1.30) {$\cdots$};
    \node[] at (4.75,-1.30) {$\cdots$};
    \node[] at (7.25,-1.30) {$\cdots$};

    \vertex at (1,-0.75) [minimum size=4pt, gray, fill=gray] {};
    \vertex at (1,-1.30) [minimum size=4pt, gray, fill=gray] {};
    \vertex at (1,-1.85) [minimum size=4pt, gray, fill=gray] {};
    \node[] at (1,-2.35) {$\vdots$};
    \vertex at (1,-3.10) [minimum size=4pt, gray, fill=gray] {};

    \vertex at (3.5,-0.75) [minimum size=4pt, gray, fill=gray] {};
    \vertex(v2) at (3.5,-1.30) [minimum size=4pt, gray, fill=gray, label=left:\footnotesize$v_2$] {};
    \vertex at (3.5,-1.85) [minimum size=4pt, gray, fill=gray] {};
    \node[] at (3.5,-2.35) {$\vdots$};
    \vertex(v1) at (3.5,-3.10) [minimum size=4pt, gray, fill=gray, label=left:\footnotesize$v_1$] {};
    
    \vertex(x1) at (6,-0.75) [minimum size=4pt, gray, fill=gray] {};
    \vertex(x2) at (6,-1.30) [minimum size=4pt, gray, fill=gray] {};
    \vertex(x3) at (6,-1.85) [minimum size=4pt, gray, fill=gray] {};
    \node[] at (6,-2.35) {$\vdots$};
    \vertex(x4) at (6,-3.10) [minimum size=4pt, gray, fill=gray] {};
    
    \vertex at (8.5,-0.75) [minimum size=4pt, gray, fill=gray] {};
    \vertex at (8.5,-1.30) [minimum size=4pt, gray, fill=gray] {};
    \vertex at (8.5,-1.85) [minimum size=4pt, gray, fill=gray] {};
    \node[] at (8.5,-2.35) {$\vdots$};
    \vertex at (8.5,-3.10) [minimum size=4pt, gray, fill=gray] {};
    
    \draw[ultra thick,red](v1)--(x4);
    \draw[ultra thick,red](x1)--(v2)--(x3)--(v2)--(x4);
    
    \end{tikzpicture}
    \caption{An instance of $\GBFS$. The source vertex $s$ is the leftmost interval in $G$. The layer $R_i$ is the set of vertices at a distance of $i$ from $s$, and the vertices within each layer are arranged in increasing order from bottom to top. For instance, $v_1$ is below $v_2$ in $R_2$, which means that the right end point of the interval $v_1$ is to the left of the right end point of the interval $v_2$. Thus, the neighbourhood of $v_1$ in $R_3$ is a subset of the neighbourhood of $v_2$ in $R_3$.}
    \label{fig:bfsrep}
    \end{center}
    \end{figure}
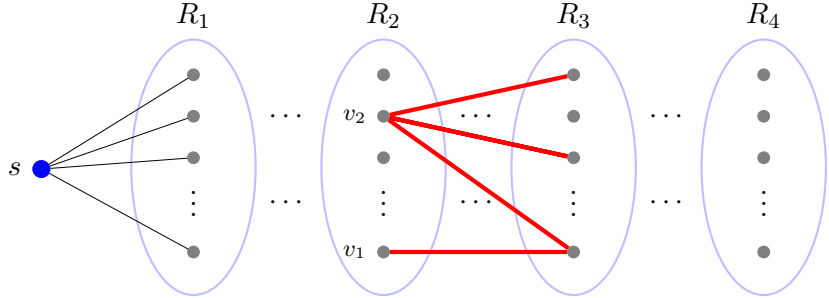
    
    In this section, we investigate the computational complexity of finding distance-preserving subgraphs with the minimum number of branching vertices. We call such subgraphs \emph{optimal} distance-preserving subgraphs. Let $\cA$ be the following decision problem.
    
    \begin{description}
	    \item{Input:} A graph $G$ and a positive integer $m$.
	    \item{Output:} Yes, if there is a single-source distance-preserving subgraph of $G$ with at most $m$ branching vertices; No, otherwise.
	\end{description}
    
    The main theorem for this section is the following, which is simply~\autoref{main1} restated over here.
    
	\begin{Theorem} \label{main1again}
	The task $\cA$ can be carried out in polynomial time, if $G$ is an interval graph.
	\end{Theorem}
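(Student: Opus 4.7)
My plan is to reduce the problem to a layered combinatorial structure built on top of a BFS decomposition, and then solve it by combining a structural normal-form lemma with a per-layer network flow inside an overall dynamic program. First, I would run BFS from $s$ to partition the vertex set into layers $R_0,R_1,R_2,\ldots$, throwing away any vertex that cannot lie on a shortest path from $s$ to a terminal. Within each layer $R_i$ I would order the vertices by the right endpoints of their intervals, as in \autoref{fig:bfsrep}. The key geometric observation, already implicit in that figure's caption, is that if $v_1,v_2\in R_i$ and the right endpoint of $v_1$ lies to the left of that of $v_2$, then $N(v_1)\cap R_{i+1}\subseteq N(v_2)\cap R_{i+1}$. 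This nested-neighbourhood property is what makes the interval structure tractable.

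Since any single-source distance-preserving subgraph can be replaced by a shortest path tree on the same vertex set with no more branching vertices, I would restrict attention to shortest path trees $T$ rooted at $s$ that contain all terminals. The next step is to prove the structural lemma hinted at in the overview: there is an optimal $T$ in which each layer $R_i\cap V(T)$ decomposes into at most one distinguished \emph{hub} $h_i$ (the used vertex of $R_i$ with the largest right endpoint) plus a collection of \emph{through} vertices, each with exactly one parent in $R_{i-1}$ and exactly one child in $R_{i+1}$; branching can occur only at hubs and at terminals. The argument is an exchange argument driven by the nesting: if some vertex of $R_i$ other than the topmost used one has two children in $R_{i+1}$, those children lie in the neighbourhood of the topmost used vertex by the nesting property, so I can reroute them through it without violating distances and without increasing the number of branching vertices.

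With this normal form in hand, the optimisation becomes a layered problem amenable to dynamic programming across layers $i=0,1,2,\ldots$. The state at layer $i$ records the identity of the hub $h_i$ (if any) and the set of through vertices of $R_i$ that are used. The transition from layer $i$ to layer $i+1$ must assign each used vertex of $R_{i+1}$ (terminals plus the newly chosen hub and throughs) to a parent among the used vertices of $R_i$, respecting adjacency and minimising the number of parents that receive two or more children (each such parent contributes one new branching vertex). I would encode this transition as a min-cost bipartite matching/flow where the cost of the $j$-th child assigned to a given vertex is $1$ when $j=2$ and $0$ otherwise; the nested-neighbourhood monotonicity makes the feasibility of an assignment a simple threshold condition and keeps both the state space and the flow instances of polynomial size.

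The main obstacle I expect is a clean proof of the normal-form lemma, together with the extension to the general case in which $s$ is not the leftmost interval. The exchange argument has to be careful when the vertex being rerouted is itself a terminal, or when it is already the parent of an entire sub-tree, since one has to verify that the reroute does not disturb distances from $s$ to terminals deeper in the tree; this is where the nesting property has to be applied inductively across several layers. For a general source, nesting becomes two-sided (to the left and to the right of $s$), and the analysis would have to be carried out on both sides and stitched together at $s$, as foreshadowed in the overview. Once these pieces are in place, polynomial running time follows: BFS takes linear time, each layer's flow problem is polynomial, and the dynamic program traverses $O(n)$ layers with polynomially many states.
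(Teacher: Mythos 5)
Your setup (BFS layering, ordering each layer by right endpoints, the nested-neighbourhood property, and the exchange argument that reroutes all children in $R_{i+1}$ through the topmost used vertex of $R_i$) matches the paper's, and that normal form is essentially the one the paper establishes in its decomposition lemma (\autoref{lm:decomposition}). The genuine gap is in your dynamic program. Your state at layer $i$ records ``the set of through vertices of $R_i$ that are used,'' which is exponentially large, and the assertion that nested-neighbourhood monotonicity collapses it to polynomial size is exactly the step that needs a proof and does not obviously hold: the used vertices of a layer are not determined by a count or a threshold, because terminals in $R_i$ are forced into the set, and the through vertices present at layer $i$ are carrying vertex-disjoint paths whose destinations are specific terminals several layers ahead. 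A layer-to-layer min-cost matching optimises each transition locally but cannot certify that the chosen set of through vertices in $R_{i+1}$ can be extended to disjoint paths reaching all the required terminals downstream; conversely, collapsing the state to just the hub identity loses the information needed to make the transition well-defined.

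The paper avoids this by changing what a DP state is. Its normal form says more than ``at most one hub per layer'': between two \emph{consecutive} branching vertices $v$ and $w$ of the optimal tree, the restriction of the tree is a union of paths that all emanate from $v$ and are vertex-disjoint except at $v$ (a ``cover'' for $w$ rooted at $v$, containing every terminal in the intermediate layers). The DP is then indexed by single vertices, with recurrence $D[v]=\min\{D_1[v],\,1+\min_{w:\,v\text{ covers }w}D[w]\}$, and the entire multi-layer feasibility question ``does $v$ cover $w$?'' is answered in one shot by a vertex-disjoint-paths network flow on a split-vertex gadget (so that a single path may pass through several terminals). This is the piece your proposal is missing: a global routing certificate for the whole branch-free segment, rather than a per-layer assignment. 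You also correctly flag the general-source case as unresolved; the paper handles it by splitting layers into left and right parts and decomposing at a pair $(w_L,w_R)$, but the main missing idea is the cover-plus-flow formulation above.
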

	
	We shall see a proof~\autoref{main1again} shortly. First, let us investigate what happens in the general setting.
	
    \subsection{General Graphs}
	
	In this section, we will prove that it is $\NP$-complete to find a distance-preserving subgraph of a general graph with the minimum number of branching vertices.
    
    \begin{Theorem} \label{nphard}
    The decision problem $\cA$ is $\NP$-complete.
    \end{Theorem}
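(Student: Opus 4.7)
The plan is to verify membership in $\NP$ and then reduce a classical $\NP$-hard covering problem to $\cA$, following the same skeleton used in~\cite{jaikumar} for the all-pairs version.

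Membership in $\NP$ is routine: a candidate subgraph $H$ is a polynomial-size certificate, two BFS computations (one in $G$, one in $H$) suffice to check that every source-to-terminal distance is preserved, and the branching vertices of $H$ can be counted by examining degrees. The entire verification runs in time polynomial in $|V(G)|+|E(G)|$.

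For $\NP$-hardness, I would reduce from Set Cover. Given a universe $U=\{u_1,\ldots,u_n\}$, a family $\mathcal{S}=\{S_1,\ldots,S_m\}$, and a target cover size $k$, I build a three-layer graph: a source $s$; a ``set vertex'' $v_j$ for each $S_j$, adjacent to $s$; and an ``element vertex'' $t_i$ for each $u_i$, adjacent to exactly those $v_j$ with $u_i\in S_j$. The terminals are the $t_i$. Since each $t_i$ sits at distance $2$ from $s$, any single-source distance-preserving subgraph must route $s$ to $t_i$ through some $v_j$ that contains $u_i$, so the set $C$ of $v_j$ retained in the subgraph forms a cover of $U$. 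To tie the branching count to $|C|$, I attach to each $v_j$ a small private gadget (for instance, two mandatory pendant terminals hung further out from $v_j$ whose only shortest paths force $v_j$'s degree above $2$), so that every retained $v_j$ becomes a branching vertex regardless of how many elements it covers. The total number of branching vertices is then $|C|+O(1)$, and minimizing it is equivalent to Set Cover.

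The hard part will be calibrating the gadget so that the correspondence between retained set vertices and branching vertices is exact, and making sure the argument does not secretly rely on preserving element-to-element distances (which would put us back in the all-pairs regime). The latter point is precisely what the theorem's preamble highlights: the reduction of~\cite{jaikumar} already produces a graph in which all terminals lie close to a designated vertex that can play the role of the single source, so once this structural feature is verified, the same instance witnesses single-source hardness with no modification. I would therefore adopt the original reduction verbatim and re-read it as a proof of single-source hardness, keeping the gadget above as a back-up in case the quoted construction requires a minor adjustment to force each used set vertex to be branching.
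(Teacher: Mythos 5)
Your final, committed route --- take the existing all-pairs $\NP$-hardness construction and observe that it already works when distances are required only from one designated source to the terminal set --- is exactly the paper's proof: the paper reuses the graph $\Gset$ of the earlier reduction, sets the source to $t_1$ and the targets to $\cU\cup\{t_0,t_1\}$, and notes that the argument goes through unchanged. So on the main line your proposal matches the paper, and the $\NP$-membership check is as routine as you say.

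Your back-up gadget, however, fails as described, so do not fall back on it without repair. If you hang two mandatory pendant terminals off \emph{every} set vertex $v_j$, then every shortest path from $s$ to those pendants is forced through $v_j$, so every $v_j$ must appear in any feasible subgraph with degree at least $3$ (one edge toward $s$ plus one toward each pendant). All $m$ set vertices are then branching in every solution, the objective is $m+O(1)$ no matter which cover is chosen, and the correspondence with Set Cover collapses. The plain three-layer graph without any gadget also fails, for the reason you half-identify: a set vertex serving exactly one element has degree $2$ and is not branching, so the quantity being minimized counts only sets serving at least two elements, which is not Set Cover (each element could be routed through a private set at zero cost). What is needed is a gadget that makes $v_j$ branching exactly when it is used while still allowing it to be omitted entirely when unused; that calibration is what the cited construction $\Gset$ already achieves, which is precisely why the paper reuses it rather than rebuilding the reduction.
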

    
    \begin{proof} In~\cite[Section 2.1]{kshitij}, it is shown that finding an optimal all-pairs distance-preserving subgraph is $\NP$-hard for general graphs. We will show that the same proof carries through for the single-source case as well.
    
    Consider the graph $\Gset$ in~\cite[Theorem 6]{kshitij}. Let $S=\{t_1\}$ be the source vertex and let $T=\cU\cup\{t_0,t_1\}$ be the set of target vertices. Observe that the proof essentially shows that even preserving distances from $t_1$ to $T$ is $\NP$-hard. This completes the proof.
	\end{proof}
    
	Since there is no hope of solving the problem for the general case (unless $\P=\NP$), we move on to interval graphs.
	
	\subsection{Representing Interval Graphs}
	
	Every interval graph has two representations: one using vertices and edges (called the \emph{graph} representation), and one using a set of intervals on the real line (called the \emph{interval} representation).
	
	Let $G$ be an interval graph and let $\fI_G$ be the interval representation of $G$. For every vertex $v\in V(G)$, let $\lefty(v)$ and $\righty(v)$ be the left and right end points, respectively, of its corresponding interval in $\fI_G$. For simplicity, we assume that all the end points of the intervals have distinct values in $\fI_G$. Define a relation ``$\prec$'' on the vertices of $G$ such that $v_1\prec v_2$ in $G$ if and only if $\righty(v_1)<\righty(v_2)$ in $\fI_G$. In other words, the intervals are ordered according to their right end points. The relation ``$\preceq$'' is similarly defined. When talking about interval graphs, we use the terms interval and vertex interchangeably, and when talking about bi-interval graphs, we use the terms rectangle and vertex interchangeably.
	
	It is well known that one method of constructing shortest paths in interval graphs is the following \emph{greedy} algorithm. Suppose we need to construct a shortest path from interval $u$ to interval $v$ in $G$ (assume $u \prec v$). The greedy algorithm starts at $u$. The next vertex on the greedy shortest path from $u$ to $v$ is the interval with the maximum $\righty$ value of all the intervals that overlap with $u$. In this way, each step of the greedy algorithm chooses the next interval that overlaps with the current interval and has the maximum $\righty$ value. It stops as soon as the current interval overlaps with $v$. It is easy to prove that the path thus obtained is a shortest path from $u$ to $v$. Let $\greedyG(u,v)$ be the shortest path produced by this greedy algorithm.
	
	\subsection{Interval Graphs with the Leftmost Interval as the Source Vertex}
	
	We now present a polynomial time algorithm that takes an interval graph as input and outputs an optimal distance-preserving subgraph of it. Here is a brief outline of our algorithm. There are two parts to our proof. We first show that we may restrict attention to solutions that have a special structure. To find the best solution with this structure, we show a natural decomposition of the optimal solution into two parts, where the first part ensures that $s$ has shortest paths to a subset of terminals; the second part consists of one or two instances of the same problem. This decomposition allows one to use dynamic programming to solve the problem. In this section, we will restrict attention to interval graphs where $s$ is the \emph{leftmost} interval. The modifications needed to solve the general case, where $s$ may appear in the middle, are described later.
	
	\myparagraph{BFS:} Consider the breadth-first search (BFS) tree rooted at $s$. The tree naturally partitions the vertices into layers $R_0,R_1,\ldots$, where a vertex is placed in layer $i$ if its shortest distance from $s$ is $i$. We arrange the layers from left to right, where $R_0=\{s\}$ is the leftmost. For a vertex $v$, let $\ell_v$ be the layer in which $v$ belongs. We further arrange vertices within a layer from bottom to top. Suppose $v_1,v_2 \in R_i$. Then we place $v_1$ below $v_2$ in $R_i$ if and only if $v_1\prec v_2$ (see~\autoref{fig:bfsrep}). The edges of $G$ connect vertices that are in the same layer or in adjacent layers. We discard all edges whose end points fall within the same layer, for such edges cannot be in any shortest path from $s$; we direct the remaining edges away from $s$ (that is, from left to right). We refer to this graph as $\GBFS$.

    \myparagraph{Decomposition:} Let us consider the problem of computing the optimal solution rooted at a vertex $v$ (in layer $i$, say) that provides shortest paths from $v$ to all terminals in layers $R_{i+1}, R_{i+2}, \ldots$; we refer to this problem as $\sssp(v)$, where $\sssp$ stands for ``single-source shortest path''. We will show that there is an optimal solution $\optsol_v$ rooted at $v$ with the following structure. There is a vertex $w$ (say in layer $R_j$, $j >i$) such that $\optsol_v$ is the union of a solution for $\sssp(w)$ and some paths from $v$ that include all terminals in layers $R_0,R_1, \ldots, R_i$ and $w$. These paths have only $v$ in common, so they do not contribute any branching vertices except perhaps at $v$. This decomposition naturally gives a dynamic programming solution for our problem. To describe our dynamic programming solution based on this observation, we will use the following definitions.

    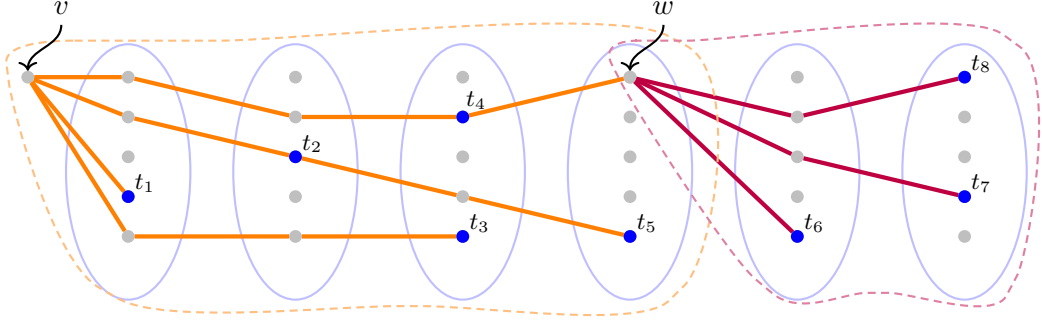
\begin{figure}
    \begin{center}
    \begin{tikzpicture} [scale=0.88,thick,every node/.style={},every fit/.style={ellipse,draw,inner sep=-12pt,text width=2cm}]

    \begin{scope}[xshift=1cm,yshift=-0.5cm,start chain=going below,node distance=7mm]
    \foreach \i in {6,7,...,9}
        \node[on chain] (d1\i) [] {};
    \end{scope}

    \begin{scope}[xshift=3.5cm,yshift=-0.5cm,start chain=going below,node distance=7mm]
    \foreach \i in {6,7,...,9}
        \node[on chain] (d2\i) [] {};
    \end{scope}

    \begin{scope}[xshift=6cm,yshift=-0.5cm,start chain=going below,node distance=7mm]
    \foreach \i in {6,7,...,9}
        \node[on chain] (d3\i) [] {};
    \end{scope}

    \begin{scope}[xshift=8.5cm,yshift=-0.5cm,start chain=going below,node distance=7mm]
    \foreach \i in {6,7,...,9}
        \node[on chain] (d4\i) [] {};
    \end{scope}
    
    \begin{scope}[xshift=11cm,yshift=-0.5cm,start chain=going below,node distance=7mm]
    \foreach \i in {6,7,...,9}
        \node[on chain] (d5\i) [] {};
    \end{scope}
    
    \begin{scope}[xshift=13.5cm,yshift=-0.5cm,start chain=going below,node distance=7mm]
    \foreach \i in {6,7,...,9}
        \node[on chain] (d6\i) [] {};
    \end{scope}

    \node [blue!25,fit=(d16) (d19)] {};
    \node [blue!25,fit=(d26) (d29)] {};
    \node [blue!25,fit=(d36) (d39)] {};
    \node [blue!25,fit=(d46) (d49)] {};
    \node [blue!25,fit=(d56) (d59)] {};
    \node [blue!25,fit=(d66) (d69)] {};
    
    \def \orig {-0.75}
    \def \vgap {0.6}
    \def \hgap {2.5}
    \def \vtag {0.2}
    \def \htag {0.25}
    
    \vertex(vv) at (-0.5,\orig) [minimum size=4pt, lightgray, fill=lightgray] {};

    \foreach \i in {0,...,5}
	    \foreach \j in {0,...,4}
            \vertex(r\i\j) at (\i*\hgap+1,-\j*\vgap+\orig) [minimum size=4pt, lightgray, fill=lightgray] {};
            
    \draw[orange, ultra thick](vv)--(r00)--(r11)--(r21)--(r30);
    \draw[orange, ultra thick](vv)--(r01)--(r12)--(r23)--(r34);
    \draw[orange, ultra thick](vv)--(r03);
    \draw[orange, ultra thick](vv)--(r04)--(r14)--(r24);
    
    \vertex at (1,-3*\vgap+\orig) [minimum size=4pt, blue, fill=blue] {};
    \node at (1+\htag,-3*\vgap+\orig+\vtag) {\footnotesize{$t_1$}};
    
    \vertex at (\hgap+1,-2*\vgap+\orig) [minimum size=4pt, blue, fill=blue] {};
    \node at (\hgap+1+\htag,-2*\vgap+\orig+\vtag) {\footnotesize{$t_2$}};
    
    \vertex at (2*\hgap+1,-4*\vgap+\orig) [minimum size=4pt, blue, fill=blue] {};
    \node at (2*\hgap+1+\htag,-4*\vgap+\orig+\vtag) {\footnotesize{$t_3$}};
    
    \vertex at (2*\hgap+1,-1*\vgap+\orig) [minimum size=4pt, blue, fill=blue] {};
    \node at (2*\hgap+1+\htag-0.05,-1*\vgap+\orig+\vtag+0.05) {\footnotesize{$t_4$}};
    
    \vertex at (3*\hgap+1,-4*\vgap+\orig) [minimum size=4pt, blue, fill=blue] {};
    \node at (3*\hgap+1+\htag,-4*\vgap+\orig+\vtag) {\footnotesize{$t_5$}};
    
    \vertex(ww) at (3*\hgap+1,\orig) [minimum size=4pt, lightgray, fill=lightgray] {};
    
    \draw[purple, ultra thick](ww)--(r41)--(r50);
    \draw[purple, ultra thick](ww)--(r42)--(r53);
    \draw[purple, ultra thick](ww)--(r44);
    
    \vertex at (4*\hgap+1,-4*\vgap+\orig) [minimum size=4pt, blue, fill=blue] {};
    \node at (4*\hgap+1+\htag,-4*\vgap+\orig+\vtag) {\footnotesize{$t_6$}};
    
    \vertex at (5*\hgap+1,-3*\vgap+\orig) [minimum size=4pt, blue, fill=blue] {};
    \node at (5*\hgap+1+\htag,-3*\vgap+\orig+\vtag) {\footnotesize{$t_7$}};
    
    \vertex at (5*\hgap+1,\orig) [minimum size=4pt, blue, fill=blue] {};
    \node at (5*\hgap+1+\htag,\orig+\vtag) {\footnotesize{$t_8$}};
    
    \def \xoff {0.2};
    \def \yoff {-0.2};
    
    \draw [densely dashed, orange!50, rounded corners] plot [smooth]
    coordinates {(0+\xoff,0+\yoff) (2+\xoff,0+\yoff) (9+\xoff,0+\yoff) (9+\xoff,-3.8+\yoff) (5+\xoff,-4+\yoff) (0.5+\xoff,-3.8+\yoff) (-1+\xoff,-0.5+\yoff) (0+\xoff,0+\yoff)};
    
    \draw [densely dashed, purple!50, rounded corners] plot [smooth]
    coordinates {(8.4+\xoff,0+\yoff) (9+\xoff,0+\yoff) (14+\xoff,0+\yoff) (14+\xoff,-3.7+\yoff) (12+\xoff,-3.8+\yoff) (10.3+\xoff,-3.7+\yoff) (8.1+\xoff,-0.5+\yoff) (8.4+\xoff,0+\yoff)};
    
    \node[anchor=north] at (0,\orig+1.3) (v) {$v$};
			\draw (v) edge[out=-90,in=90,->] (vv);
	\node[anchor=north] at (3*\hgap+1.5,\orig+1.3) (w) {$w$};
			\draw (w) edge[out=-90,in=90,->] (ww);
    
    \end{tikzpicture}
    \caption{The structure of an optimal solution in $\GBFS$. Note that $v$ covers all layers up to (and including) that of $w$, and $w$ covers all subsequent layers. This is formally captured by~\autoref{lm:decomposition}.} 
    \label{fig:optsol}
    \end{center}
    \end{figure}

    \begin{Definition} [Cover] \label{coverit}
    Let $v\in V$ and $X\subseteq V$ ($X \neq \emptyset$). We say that $v$ \textbf{covers} $X$ if there exists a tree $\fX$ in the directed graph $\GBFS$ such that
    \begin{enumerate}
        \item $\{v\}\cup X\subseteq V(\fX)$.
        \item $\operatorname{out-deg}_{\fX}(w)\leq 1\qquad \forall\,w\in V(\fX)-\{v\}$.
        \item All leaves of $\fX$ belong to $X$.
    \end{enumerate}
    Thus, $\fX$ is a shortest path tree from $v$ to all the vertices of $X$ such that $\fX$ has no branching vertices other than perhaps $v$ itself. Such a tree $\fX$ is called a cover for $X$ rooted at $v$.
    \end{Definition}
	    
    Fix $v \in R_i, w \in R_j$ ($j>i$). If $\fX$ is a cover for $X:= ((R_{i+1} \cup R_{i+2} \cup \cdots R_j) \cap T) \cup\{w\}$ rooted at $v$, we simply say that $\fX$ is a cover for $w$ rooted at $v$; if such a cover exists then we say that $v$ covers $w$. The decomposition of the solution is formally established below.

    \begin{Lemma} [Decomposition lemma] \label{lm:decomposition} Suppose $v \in V(G)$ is not in in the rightmost layer. Then, there is a vertex $w$ in a layer after $v$'s layer, and an optimal solution $Z$ for $\sssp(v)$ that has the form $A \cup B$, where $A$ is a cover for $w$ rooted at $v$ and $B$ is a solution for the problem $\sssp(w)$. Conversely, suppose $w$ is a vertex in a layer after $v$'s, $A$ is a cover for $w$ rooted at $v$ and $B$ is a solution for $\sssp(w)$. Then, $A \cup B$ is a solution for $\sssp(v)$.
    \end{Lemma}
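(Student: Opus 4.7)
The converse direction is straightforward. Given a cover $A$ for $w$ rooted at $v$ and a solution $B$ for $\sssp(w)$, a terminal $t$ in layer $R_m$ with $i<m\le j$ has a shortest $v$-to-$t$ path inside $A$, and a terminal $t$ in a layer past $R_j$ is reached by concatenating the unique $v$-to-$w$ path in $A$ with a shortest $w$-to-$t$ path in $B$; as both legs advance layer by layer, this walk has length $(j-i)+d(w,t)=d(v,t)$ and is therefore a shortest path.

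For the forward direction, I take any optimal $Z$ for $\sssp(v)$, pruned to a tree rooted at $v$, and produce an optimal solution of the required form. If $v$ is the only branching vertex of $Z$, then $Z$ itself is a cover for any deepest leaf $w$ of $Z$ (every leaf of $Z$ is a terminal by optimality), so one may set $A:=Z$ and $B:=\{w\}$. Otherwise, let $R_j$ be the smallest layer containing a non-$v$ branching vertex of $Z$, and let $w$ be the topmost $R_j$-vertex of $Z$ under $\preceq$. Define $A$ as the union, over $x \in X := \bigl((R_{i+1}\cup\cdots\cup R_j)\cap T\bigr)\cup\{w\}$, of the unique $v$-to-$x$ path in $Z$. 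By the choice of $j$, no vertex of $Z$ at layers $R_{i+1},\ldots,R_{j-1}$ branches in $Z$; hence every non-$v$ vertex of $A$ has out-degree at most one in $A$ and every leaf of $A$ lies in $X$, so $A$ is a cover for $w$ rooted at $v$. Take $B$ to be any optimal $\sssp(w)$; feasibility holds because $w\succeq$ every other $R_j$-vertex of $Z$, so nested neighborhoods ensure that every terminal in a layer past $R_j$ reachable from $v$ in $\GBFS$ is reachable from $w$.

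The main obstacle is verifying that $Z':=A\cup B$ has no more branching vertices than $Z$. The branching at $v$ can only decrease in passing from $Z$ to $Z'$, because children of $v$ whose $Z$-subtrees contain no terminal in $R_{i+1}\cup\cdots\cup R_j$ and do not lead to $w$ disappear from $A$ --- their downstream terminals are instead served by $B$ through $w$. Beyond $v$, the branching of $Z$ is concentrated in the deep portion $D$ (the forest of $Z$-subtrees rooted at the $R_j$-vertices of $Z$), while the branching of $Z'$ equals that of $B$. The remaining inequality --- that an optimal $\sssp(w)$ has branching count at most that of $D$, plus any reduction in $v$'s branching --- is proved by rerouting each non-$w$ root subtree of $D$ through $w$'s topmost-child spine in $\GBFS$ one layer at a time, using nested neighborhoods to relocate each path upward. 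The exchange argument is the subtle step: attaching the non-$w$ root subtrees directly at $w$ could inflate $w$'s out-degree and over-count, but threading them through $w$'s spine avoids this, and any residual mismatch is absorbed by the reduction in $v$'s branching.
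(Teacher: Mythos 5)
Your setup for the forward direction matches the paper's: both arguments pick $R_j$ as the first layer past $v$ containing a branching vertex of the optimal tree $Z$, take $w$ to be the topmost $Z$-vertex of $R_j$, and extract the cover $A$ from $Z$'s own paths to $X$; the converse direction is handled the same way in both. The divergence, and the gap, is in how $B$ is produced and how optimality of $A\cup B$ is certified. The paper does not take $B$ to be an abstract optimal solution of $\sssp(w)$ and then compare counts; it performs one local surgery on $Z$: delete every edge of $Z$ leaving $R_j$ and attach every $Z$-vertex of $R_{j+1}$ as a child of $w$ (feasible because $w$ is topmost in $R_j$, so by nested neighbourhoods it is adjacent to every $R_{j+1}$-vertex that any $R_j$-vertex of $Z$ was adjacent to). This cannot increase the branching count: it destroys the branching of every $R_j$-vertex of $Z$ --- and there is at least one such, by the choice of $j$ --- while creating at most one new branching vertex, namely $w$. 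The modified tree is still optimal and is literally of the form $A\cup B$, with $B$ the part hanging from $w$.

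Your proposal instead leaves the key inequality (that an optimal $\sssp(w)$ has branching at most that of the deep forest $D$) to a sketched ``rerouting through $w$'s topmost-child spine one layer at a time,'' with ``any residual mismatch absorbed by the reduction in $v$'s branching.'' That last clause is not an argument --- there need not be any reduction at $v$ to absorb anything --- and the detour it is meant to rescue addresses a non-problem: your worry that attaching the non-$w$ root subtrees directly at $w$ inflates $w$'s out-degree and over-counts is unfounded, precisely because $R_j$ already contains a branching vertex of $Z$, so concentrating all $R_j\to R_{j+1}$ edges at $w$ trades at least one branching vertex for at most one. Meanwhile the spine-threading you propose would itself risk creating a fresh branching vertex at every spine vertex where a subtree is grafted, so it does not obviously work even if carried out. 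As written, the unproven heart of your forward direction is exactly this counting step; replacing it with the direct reattachment at $w$ closes the gap and recovers the paper's proof.
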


    \autoref{fig:optsol} provides a nice visualization of the decomposition lemma. We will present the proof of this lemma shortly. First, let us see how to build a dynamic programming solution for the problem based on this lemma. Let $D[v]$ be the minimum number of branching vertices in a solution for $\sssp(v)$; let $D_1[v]$ be the minimum number of branching vertices in a solution for $\sssp(v)$, where $v$ has exactly one outgoing edge (in particular, $v$ is not a branching vertex). Then, $D_1[v]$ and $D[v]$ can be computed using the algorithm $\textsc{FindOpt}(\GBFS)$ presented below.
    
	
	
	\begin{algorithm}
	\begin{algorithmic}[1]
	    \For{each vertex $v$ in $R_r$} \Comment{$R_r$ is the rightmost layer of $\GBFS$.}
            \State $D_1[v]\gets \infty, D[v]\gets 0$
        \EndFor
        \State $i\gets r-1$
        \While{$i\geq 0$} \Comment{$R_0=\{s\}$ is the leftmost layer of $\GBFS$.}
            \For{each vertex $v$ in $R_i$}
                \If{$R_{i+1}$ has no target vertices}
                    \State $D_1[v]\gets \underset{w\in R_{i+1}}{\operatorname{min}} D[w]$
                \ElsIf{$R_{i+1}$ has one target vertex (say $t$)} 
                    \State $D_1[v]\gets D[t]$
                \Else
                    \State $D_1[v]\gets \infty$
                \EndIf
                \State $D[v]\gets
                \operatorname{min}
                \left\{D_1[v],1+\ \underset{\substack{w:\,\ell_w>i;\\ v\text{ covers } w}}{\operatorname{min}}\ D[w]\right\}$
                \State $i\gets i-1$
            \EndFor
        \EndWhile
    \end{algorithmic}\caption{$\textsc{FindOpt}(\GBFS)$}
    \end{algorithm}

	\myparagraph{Correctness:} We will prove by induction (proceeding from right to left) that $D[v]$, as computed by the algorithm $\textsc{FindOpt}()$, is indeed the number of branching vertices in an optimal solution for $\sssp(v)$. The base case, when $v$ is in the rightmost layer, is straightforward. Assume that $D[w]$ has been computed correctly for all vertices $w$ to the right of $v$. Fix an optimal solution $Z$ for $\sssp(v)$ and let $D^*$ be the number of branching vertices in it. By the second part of~\autoref{lm:decomposition}, we have $D[v] \geq D^*$. It remains to show that $D[v] \leq D^*$. If $v$ has out-degree $1$ in $Z$, then let $w$ be its out-neighbour. By induction, $D_1[v] \leq D[w] = D^*$; it follows that $D[v] \leq D_1[v] \leq D^*$. If $v$ has out-degree at least two, then using~\autoref{lm:decomposition}, we obtain a vertex $w$ and a decomposition of $Z=A \cup B$. Then, the number of branching vertices in $B$ is at least $D[w]$, so $D^*= D[w]+1$. Thus, $D[v] \leq D[w] + 1=D^*$. We conclude $D[v]=D^*$, thus completing the proof.
	
	\myparagraph{Efficiency:} Note that to compute $D[v]$, the algorithm needs to determine if $v$ covers $w$. We will verify in the next section that using a polynomial time network-flows algorithm one can efficiently determine if $v$ covers $w$. Since the above algorithm has only two nested loops (the \textbf{while} and \textbf{for} loops in lines 5,6 run that through all vertices, and the loop implicit in line 14), the overall running time of the algorithm is still polynomially bounded in the input size. This confirms that $D[s]$ can be computed efficiently. By storing relevant information along the way in the computation, one can efficiently obtain the optimal solution as well. 
	
	It remains to establish~\autoref{lm:decomposition} and describe how we can determine if $v$ covers $w$, and if it does find a cover.

	\begin{proof}[Proof of~\autoref{lm:decomposition}]
	Fix an optimal solution $\optsol$ rooted at $v$. If $\optsol$ has no branching vertices to the right of $v$, then we take $w$ to be the rightmost terminal. With this choice of $w$, it is easy to verify the claim of the lemma.
	
	Hence suppose that $\optsol$ has a branching vertex to the right of $v$. Suppose $v \in R_i$. Let $w$ be the closest branching vertex to the right of $v$ in $\optsol$. Suppose $w \in R_j$. Let $x$ be the topmost vertex of $\optsol$ in $R_j$. Modify $\optsol$ as follows. Delete all edges of $\optsol$ that leave $R_j$, and add edges from $x$ to all vertices of $\optsol$ in $R_{j+1}$. The resulting tree $Z$ is still an optimal solution. Let $A$ be the restriction of $Z$ to layers $R_i, R_{i+1}, \ldots, R_j$; let $B$ be the restriction of $Z$ to layers $R_{j+1}, R_{j+2},\ldots$. Clearly, 
	$A$ is a cover for $w$ rooted at $v$ and $B$ is a solution for the problem $\sssp(w)$. This completes the proof of the first part. The second part is straightforward.
	\end{proof}

	\begin{figure}
    \begin{center}
    \begin{tikzpicture} [scale=0.88,thick,every node/.style={},every fit/.style={ellipse,draw,inner sep=-12pt,text width=2cm}]
        
    \def \orig {-0.75}
    \def \vgap {0.8}
    \def \hgap {2}
    \def \vtag {0.15}
    \def \htag {0.25}
    
    \foreach \j in {1,2,...,6}
    {
        \begin{scope}[xshift=\j*\hgap*28.45-42.68,yshift=-12,start chain=going below,node distance=8.5mm]
        \foreach \i in {6,7,...,9}
            \node[on chain] (d\j\i) [] {};
        \end{scope}
    }
    
    \node [blue!25,fit=(d16) (d19)] {};
    \node [blue!25,fit=(d26) (d29)] {};
    \node [blue!25,fit=(d36) (d39)] {};
    
    \vertex(vv) at (-1,\orig) [minimum size=4pt, lightgray, fill=lightgray, label=above:$v$] {};

    \foreach \i in {0,...,2}
	    \foreach \j in {0,...,4}
            \vertex(r\i\j) at (\i*\hgap+\hgap-1.5,-\j*\vgap+\orig) [minimum size=4pt, lightgray, fill=lightgray] {};
            
    \draw[red, ultra thin](vv)--(r00)--(r11)--(r21);
    \draw[red, ultra thin](vv)--(r03)--(r12)--(r23);
    \draw[red, ultra thin](vv)--(r04)--(r14);
    
    \vertex at (\hgap-1.5,-3*\vgap+\orig) [minimum size=4pt, blue, fill=blue] {};
    \node at (\hgap-1.5+\htag-0.1,-3*\vgap+\orig+\vtag+0.1) {\footnotesize{$x_1$}};
    
    \vertex at (\hgap-1.5,\orig) [minimum size=4pt, blue, fill=blue] {};
    \node at (\hgap-1.5+\htag,\orig+\vtag) {\footnotesize{$x_2$}};
    
    \vertex at (2*\hgap-1.5,-4*\vgap+\orig) [minimum size=4pt, blue, fill=blue] {};
    \node at (2*\hgap-1.5+\htag,-4*\vgap+\orig+\vtag) {\footnotesize{$x_3$}};
    
    \vertex at (2*\hgap-1.5,-2*\vgap+\orig) [minimum size=4pt, blue, fill=blue] {};
    \node at (2*\hgap-1.5+\htag,-2*\vgap+\orig+\vtag) {\footnotesize{$x_4$}};
    
    \vertex at (3*\hgap-1.5,-3*\vgap+\orig) [minimum size=4pt, blue, fill=blue] {};
    \node at (3*\hgap-1.5+\htag,-3*\vgap+\orig+\vtag) {\footnotesize{$x_5$}};
    
    \vertex at (3*\hgap-1.5,-1*\vgap+\orig) [minimum size=4pt, blue, fill=blue] {};
    \node at (3*\hgap-1.5+\htag,-1*\vgap+\orig+\vtag) {\footnotesize{$x_6$}};
    
    \node at (3*\hgap+0.6,-2*\vgap+\orig) {\Huge{$\rightsquigarrow$}};
    
    
    
    \def \hshi {9.5}
    \def \vout {0.15}
    
    \vertex(vee) at (-2+\hshi,\orig) [minimum size=6pt, green, fill=green, label=above:$v$] {};

    \foreach \i in {3,...,5}
	    \foreach \j in {0,...,4}
            \vertex(r\i\j) at (\hshi+\i*\hgap-6.5,-\j*\vgap+\orig) [minimum size=4pt, lightgray, fill=lightgray] {};
    
    \vertex at (\hshi+3*\hgap-6.5,-3*\vgap+\orig) [minimum size=6pt, white, fill=white] {};
    \vertex(r33in) at (\hshi+3*\hgap-6.5-\vout,-3*\vgap+\orig) [minimum size=4pt, blue, fill=blue] {};
    \vertex(r33ot) at (\hshi+3*\hgap-6.5+\vout,-3*\vgap+\orig) [minimum size=4pt, blue, fill=blue] {};

    \vertex at (\hshi+3*\hgap-6.5,\orig) [minimum size=6pt, white, fill=white] {};    
    \vertex(r30in) at (\hshi+3*\hgap-6.5-\vout,\orig) [minimum size=4pt, blue, fill=blue] {};
    \vertex(r30ot) at (\hshi+3*\hgap-6.5+\vout,\orig) [minimum size=4pt, blue, fill=blue] {};
    
    \vertex at (\hshi+4*\hgap-6.5,-4*\vgap+\orig) [minimum size=6pt, white, fill=white] {};
    \vertex(r44in) at (\hshi+4*\hgap-6.5-\vout,-4*\vgap+\orig) [minimum size=4pt, blue, fill=blue] {};
    \vertex(r44ot) at (\hshi+4*\hgap-6.5+\vout,-4*\vgap+\orig) [minimum size=4pt, blue, fill=blue] {};
    
    \vertex at (\hshi+4*\hgap-6.5,-2*\vgap+\orig) [minimum size=6pt, white, fill=white] {};
    \vertex(r42in) at (\hshi+4*\hgap-6.5-\vout,-2*\vgap+\orig) [minimum size=4pt, blue, fill=blue] {};
    \vertex(r42ot) at (\hshi+4*\hgap-6.5+\vout,-2*\vgap+\orig) [minimum size=4pt, blue, fill=blue] {};
    
    \vertex at (\hshi+5*\hgap-6.5,-3*\vgap+\orig) [minimum size=6pt, white, fill=white] {};
    \vertex(r53in) at (\hshi+5*\hgap-6.5-\vout,-3*\vgap+\orig) [minimum size=4pt, blue, fill=blue] {};
    \vertex(r53ot) at (\hshi+5*\hgap-6.5+\vout,-3*\vgap+\orig) [minimum size=4pt, blue, fill=blue] {};
    
    \vertex at (\hshi+5*\hgap-6.5,-1*\vgap+\orig) [minimum size=6pt, white, fill=white] {};
    \vertex(r51in) at (\hshi+5*\hgap-6.5-\vout,-1*\vgap+\orig) [minimum size=4pt, blue, fill=blue] {};
    \vertex(r51ot) at (\hshi+5*\hgap-6.5+\vout,-1*\vgap+\orig) [minimum size=4pt, blue, fill=blue] {};
    
    \draw[red!50, ultra thick](vee)--(r30in);
    \draw[red!50, ultra thick](vee)--(r33in);
    \draw[red!50, ultra thick](vee)--(r34)--(r44in);
    
    \draw[red!50, ultra thick](r30ot)--(r41)--(r51in);
    \draw[red!50, ultra thick](r33ot)--(r42in);
    \draw[red!50, ultra thick](r42ot)--(r53in);
    
    \vertex(zee) at (\hshi+4.9,-2*\vgap+\orig) [minimum size=6pt, yellow, fill=yellow, label=below:$z$] {};
    
    \draw[yellow, thick](r30in) to[out=30,in=100] (zee);
    \draw[yellow, thick](r51in) to[out=50,in=120] (zee);
    \draw[yellow, thick](r33in) to[out=50,in=155] (zee);
    \draw[yellow, thick](r42in) to[out=30,in=165] (zee);
    \draw[yellow, thick](r53in) to[out=45,in=195] (zee);
    \draw[yellow, thick](r44in) to[out=40,in=250] (zee);
    
    \draw[green, thick](vee) to[out=15,in=135]  (r30ot);
    \draw[green, thick](vee) to[out=30,in=150]  (r51ot);
    \draw[green, thick](vee) to[out=320,in=125] (r33ot);
    \draw[green, thick](vee) to[out=330,in=140] (r42ot);
    \draw[green, thick](vee) to[out=340,in=135] (r53ot);
    \draw[green, thick](vee) to[out=275,in=210] (r44ot);
    
    
    \end{tikzpicture}
    \caption{A cover rooted at $v$ is transformed into a max-flow problem as follows. (i) Each $x_i$ is split into $\xiin$ and $\xiout$. (ii) $v$ is connected to all $\xiout$'s. (iii) All $\xiin$'s are connected to a sink $z$.}
    \label{fig:maxflo}
    \end{center}
    \end{figure}
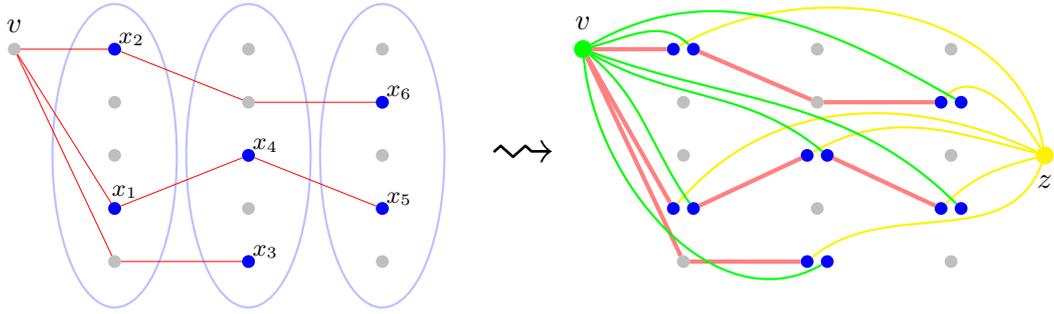

	\subsection{Covers}
	
	In this section, we prove the following lemma.
	\begin{Lemma}
	    There exists a polynomial time algorithm that given $v\in V$ and $\emptyset\neq X\subseteq V$, outputs ``YES'' if $v$ covers $X$, and ``NO'' otherwise. If the answer is ``YES'', the algorithm produces the cover as well.
	\end{Lemma}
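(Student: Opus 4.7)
The plan is to reduce the decision to a single max-flow computation in a polynomial-sized auxiliary network $N$, built from $\GBFS$ exactly as depicted in Figure~\ref{fig:maxflo}. Keep $v$ as the source and introduce a new sink $z$. For every vertex $w\neq v$ of $\GBFS$, split $w$ into $w_\inn$ and $w_\out$ joined by an edge $w_\inn\to w_\out$ of capacity $1$; this unit vertex capacity is what will force the root-to-leaf paths of any cover to be vertex-disjoint away from $v$. Replace every directed edge $a\to b$ of $\GBFS$ by an edge $a_\out\to b_\inn$ of capacity $1$ (using $v$ in place of $v_\out$). Finally, for every terminal $x\in X$ add a ``green'' edge $v\to x_\out$ and a ``yellow'' edge $x_\inn\to z$, each of capacity $1$.

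The central claim to prove is that $v$ covers $X$ if and only if the maximum integral $v$-to-$z$ flow in $N$ has value exactly $|X|$. For the forward direction, I would take a cover $\fX$, decompose it into its maximal root-to-leaf paths $P_1,\ldots,P_p$, and along each $P_j$ list the visited terminals in order as $t_{j,1},t_{j,2},\ldots,t_{j,m_j}$, where $t_{j,m_j}$ is the leaf. Along each $P_j$ I would send $m_j$ units of flow: the first unit follows $P_j$ from $v$ to $t_{j,1}$ through the graph-derived edges and exits through $(t_{j,1})_\inn\to z$, while for $1\leq i<m_j$ each subsequent unit enters $N$ through the green edge $v\to (t_{j,i})_\out$, traces the sub-path of $P_j$ from $t_{j,i}$ to $t_{j,i+1}$, and exits through $(t_{j,i+1})_\inn\to z$. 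Vertex-disjointness of the $P_j$ off $v$ guarantees that every $w_\inn\to w_\out$ edge carries at most one unit, the green and yellow edges are each used at most once by construction, and the total flow is $\sum_j m_j=|X|$.

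For the converse, I would take an integral flow of value $|X|$ and decompose it into $|X|$ simple paths from $v$ to $z$. Each such path either leaves $v$ through a graph-derived edge (call this a \emph{type-A} segment) or through a green edge $v\to y_\out$ for some $y\in X$ (\emph{type-B}), and each path ends with a yellow edge $x_\inn\to z$ at some $x\in X$. Because every yellow edge has capacity $1$ and the total flow is $|X|$, every terminal is absorbed exactly once; because every green edge has capacity $1$, each terminal hosts at most one type-B segment as its start. I would then reassemble the cover by chaining: attach each type-B segment, which starts at some $y\in X$, to the unique prior segment that ends at $y$. The resulting set of $\GBFS$-paths from $v$ inherits vertex-disjointness away from $v$ from the unit vertex capacities; every leaf is a terminal because each path ends with a yellow edge; and every terminal is visited because every yellow edge is saturated.

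The main obstacle is the chaining step of the converse: one must check that every terminal which starts a type-B segment also has a matching segment ending there, so that no ``dangling'' continuation is left over. This is exactly where the hypothesis that the max-flow value equals $|X|$, rather than merely being feasible, is used, since it forces every yellow edge to be saturated. Once this equivalence is established, the algorithm is immediate: construct $N$ in linear time, compute a max-flow by any polynomial-time algorithm, compare its value with $|X|$, and if they agree, run a standard integral flow-decomposition plus the chaining above to output the cover $\fX$.
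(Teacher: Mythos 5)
Your overall strategy --- split vertices to impose unit vertex capacities, detach each terminal's in-side to a super-sink $z$ and its out-side to the source $v$, and test whether the maximum flow equals $|X|$ --- is the same reduction the paper uses in $\textsc{RunForCover}()$. But your gadget at the terminals is wrong, and the error sits exactly at the crux of the reduction: you keep the unit-capacity pass-through edge $x_{\inn}\to x_{\out}$ for every $x\in X$ \emph{in addition to} the green edge $v\to x_{\out}$ and the yellow edge $x_{\inn}\to z$. This lets one unit of flow terminate at $x$ (entering $x_{\inn}$ and leaving by the yellow edge) while a second unit passes straight through $x$ (entering $x_{\inn}$ by a different graph edge and leaving by $x_{\inn}\to x_{\out}$), and a third emanates from $x$ via the green edge. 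Concretely, take layers $\{v\},\{a,b\},\{x\},\{y_1,y_2\}$ with edges $v\to a$, $v\to b$, $a\to x$, $b\to x$, $x\to y_1$, $x\to y_2$, and $X=\{x,y_1,y_2\}$. No cover exists, since $y_1$ and $y_2$ are reachable only through $x$, which is allowed out-degree at most $1$ in a cover. Yet your network admits an integral flow of value $3$: the paths $v\to a_{\inn}\to a_{\out}\to x_{\inn}\to z$, then $v\to x_{\out}\to (y_1)_{\inn}\to z$, then $v\to b_{\inn}\to b_{\out}\to x_{\inn}\to x_{\out}\to (y_2)_{\inn}\to z$ respect every capacity. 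Your chaining step would then output a subgraph in which $x$ has out-degree $2$, so the central equivalence ``max flow $=|X|$ iff $v$ covers $X$'' fails in its converse direction, and saturation of the yellow edges (the point you flag as the delicate one) does not rescue it: the vertex-disjointness-away-from-$v$ that you claim is ``inherited from the unit vertex capacities'' is precisely what breaks at terminals.

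The fix is what the paper does: for $x\in X$, do \emph{not} join $\xin$ to $\xout$; make $z$ the \emph{only} out-neighbour of $\xin$ and $v$ the \emph{only} in-neighbour of $\xout$. Then every flow path is forced to stop at the first terminal it meets, and any continuation past a terminal must be a fresh path from $v$ through the green edge; in the example above the maximum flow correctly drops to $2<|X|$. With that one change, your forward direction, your chaining argument for the converse, and the polynomial-time claim all go through and coincide with the paper's proof.
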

	A simple cover for $X$ would be a set of paths from $v$ to vertices in $X$ that have no vertices in common except $v$. To determine if such a cover exists, we declare $v$ to be the source, add a new sink vertex $z$, and edges of the form $(x,z)$ for each $x$ in $X$. Then, well-known algorithms based on network flows in directed graphs can be used to efficiently determine if this network has $|X|$ source to sink (internally) vertex-disjoint paths~\footnote{Text books (see~\cite[Chapter 26]{clrs}) often present only the algorithm for finding edge-disjoint paths; the problem of finding vertex-disjoint paths efficiently reduces to the problem of finding edge-disjoint paths (see~\cite[Theorem 32.4]{vanlintwilson}).}.
	However, a cover need not consist of $|X|$ vertex-disjoint paths; some of the paths might include multiple vertices from $X$. Network flows can still be used to find such covers. In the algorithm $\textsc{RunForCover}()$ presented below, a new sink is added, each $x \in X$ is replaced by two vertices $\xin$ and $\xout$, and the edges incident on $x$ are redirected appropriately. As we argue below, finding a cover is equivalent to finding $|X|$ vertex-disjoint paths in this new graph. 
	\begin{algorithm}
	\begin{algorithmic}[1]
        \State $\Gnew\gets \GBFS$ \Comment{Copy $\GBFS$ into $\Gnew$.}
        \State $V(\Gnew)\gets V(\Gnew)\cup\{z\}$ \Comment{Add a sink vertex $z$ to $\Gnew$.}
	    \For{each $x\in X$}
            \State $V(\Gnew)\gets V(\Gnew)\cup\{\xin,\xout\}$ \Comment{Add two vertices $\xin$ and $\xout$ to $\Gnew$.}
            \State $\inn(\xin)\gets \inn(x)$ \Comment{$\inn(x)\triangleq\{w:(w,x)\in E(\Gnew)\}$.}
            \State $\out(\xout)\gets \out(x)$ \Comment{$\out(x)\triangleq\{w:(x,w)\in E(\Gnew)\}$.}
            \State $\out(\xin)\gets \{z\}, \inn(\xout)\gets \{v\}$
            \State $V(\Gnew)\gets V(\Gnew)-\{x\}$ \Comment{Delete $x$ from $\Gnew$.}
        \EndFor
        \State $F \gets\textsc{MaxVertDisj}(\Gnew,v,z)$ \Comment{$F$ is a set containing the maximum number of vertex-disjoint paths from $v$ to $z$ in $\Gnew$.}
        \If{$|F|<|X|$}
            \State \Return ``NO''
        \Else
            \State \Return ``YES'' \Comment{Translate the paths in $F$ to a cover in $\GBFS$ (details omitted).}
        \EndIf
    \end{algorithmic}\caption{$\textsc{RunForCover}(\GBFS,v,X)$}
    \end{algorithm}
	\myparagraph{Correctness:} See~\autoref{fig:maxflo} for an illustration of $\textsc{RunForCover}()$. Suppose the algorithm outputs ``YES''. We will show that in $\GBFS$ there is a cover for $X$ rooted at $v$. Let $H$ be the union of the paths in $F$ (line 10 of $\textsc{RunForCover}()$). Thus, each vertex in $H$ (except $v$ and $z$) has in-degree and out-degree equal to $1$. For each $x\in X$, add the edge $(\xin,\xout)$ to $E(H)$, and delete all edges \emph{coming in} to $\xout$ and all edges \emph{going out} of $\xin$. Now $z$ is an isolated vertex; delete it from $H$. Note that each vertex of $H$ (except $v$) now has in-degree $1$, and $v$ has in-degree $0$. Thus, $H$ is a tree~\footnote{Every DAG with exactly one vertex of in-degree 0 and all other vertices of in-degree 1 is a tree.}. Furthermore, each vertex of $H$ (except $v$) has out-degree at most $1$ (vertices of the form $\xout$ being the only ones with out-degree $0$). Thus, after contracting edges of the form $(\xin,\xout)$ to a single vertex $x$, the tree $H$ satisfies all the properties of a cover (see~\autoref{coverit}), and all the vertices and edges of $H$ are in $\Gright$.
	
	Now suppose $\fX$ is a cover in $\GBFS$ for $X$ rooted at $v$. We will show that there are $|X|$ vertex-disjoint paths from $v$ to $z$ in $\Gnew$. Fix an $x\in X-\{v\}$. Consider the unique $v\rightarrow x$ path; let $\xprev$ be the vertex in $\{v\} \cup (X-\{x\})$ that appears last on this path. That is, $\xprev$ and $x$ are successive vertices of $\{v\}\cup X$ on the unique path from $v$ to $x$ (note that $\xprev$ can be $v$ itself). Then, the sub-path from $\xprev$ to $x$ in $\fX$ contains no other vertices of $\{v\}\cup X$. In $\Gnew$, either $\xprev=v$ or there is an edge from $v$ to $\xprev_{\operatorname{out}}$. Also, $\Gnew$ has an edge from $\xin$ to $z$. Thus, the path $v\rightarrow \xprev_{\operatorname{out}} \rightarrow \xin \rightarrow z$ (or $v \rightarrow \xin \rightarrow z$, in case $\xprev=v$) is a path from $v$ to $z$; this path is, by construction, unique to $x$, and vertex-disjoint from all other such paths. We thus obtain $|X|$ vertex disjoint $v \rightarrow z$ paths in $\Gnew$.
	
	\myparagraph{Efficiency:} The algorithm $\textsc{MaxVertDisj}()$ can be implemented in polynomial time using network flows~\cite[Theorem 32.4]{vanlintwilson}. Every other step of the algorithm $\textsc{RunForCover}()$ runs in polynomial time.
	
	\subsection{Generalizing to all Interval Graphs}
	
	Now that we have an efficient algorithm to determine the optimal solution for the single-source shortest path problem in interval graphs assuming that the source was the leftmost interval (see~\autoref{main1again}), we can solve the problem for interval graph with no restrictions on the position of the source. The ideas, though, are similar to the restricted case.
	
	\myparagraph{BFS:} We start by performing a breadth-first search starting at the source $s$. As before, this partitions the vertices into layers. We discard edges with both end points in the same layer, and direct the remaining edges away from $s$. Let the layers be $\{s\}=U_0, U_1, U_2, \ldots, U_\alpha$. As the BFS proceeds, we encounter intervals that are at various distances from $s$. Intervals that are at distance two or more from $s$ lie entirely to the left or entirely to the right of $s$, and these vertices cannot be adjacent. Thus, for $i\geq 2$, we can partition $U_i$ as $L_i \cup R_i$, such that all edges leaving $L_i$ enter $L_{i+1}$ and all edges leaving $R_i$ enter $R_{i+1}$. Let $L[i,j] = L_i \cup L_{i+1} \cup \cdots \cup L_j$; similarly, let $R[i,j] = R_i \cup R_{i+1} \cup \cdots \cup R_j$ and $U[i,j] = U_i \cup U_{i+1} \cup \cdots \cup U_j$. For a vertex $u \in L_j$, we consider the problem $\sssp^L(u)$ of constructing a directed tree rooted at $u$ that provides shortest paths to all terminals in $L[j+1,\alpha]$. Similarly, for $u\in R_j$, we define $\sssp^R(u)$ as the problem of constructing a directed tree rooted at $u$ that provides shortest paths to all terminals in $R[j+1,\alpha]$ (some of the later layers may be empty). Let $D^L[u]$ be the minimum number of branching vertices in a solution for $\sssp^L(u)$; let $D_1^L[u]$ be the minimum number of branching vertices in a solution for $\sssp^L(u)$ where $u$ has out-degree $1$. Let $D^R[u]$ and $D^R_1[u]$ be the corresponding quantities for $\sssp^R(u)$. For $u \in U_1$, let $\sssp(u)$ represent the problem of constructing a tree rooted at $u$ that provides shortest paths to terminals in $U[2,\alpha]$; let $D_1[u]$ and $D[u]$ be the corresponding values.
	
	\myparagraph{Dynamic programming:} One can compute $D_1^L[u], D^L[u], D_1^R[u]$ and $D[u]$ using the method presented in the algorithm $\textsc{FindOpt}()$ \emph{mutatis mutandis}. We now focus on vertices $u \in U_1$ (we will get to $s$ eventually). Now, $D_1[u] = D_1^R[u]$ if $L[2,\alpha] \cap T = \emptyset$ and $D_1[u] = D_1^L[u]$ if $R[2,\alpha] \cap T = \emptyset$; otherwise,  $D_1[v]=\infty$.  Similarly, we can compute $D[u]$ when either $L[2,\alpha] \cap T = \emptyset$ or $R[2,\alpha] \cap T = \emptyset$. If both $L[2,\alpha] \cap T \neq \emptyset$ and $R[2,\alpha]\cap T \neq \emptyset$, then we will rely on a decomposition of the optimal solution. First, we need a definition. Let $w_L \in L_i$ and $w_R \in L_j$.
	We say that $\fX$ is a \emph{cover} for $(w_L, w_R)$, rooted at $u$ if there exist vertex-disjoint paths from $u$ to $((L[2,i] \cap T) \cup (L[2,j]\cap T))\cup \{w_L, w_R\}$.
	\begin{Lemma}
	Suppose $L[2,\alpha] \cap T \neq \emptyset$ and $R[2,\alpha]\cap  T\neq \emptyset$. Let $u \in U_1$. Then, there are vertices $w_L \in L[2,\alpha]$ and $w_R \in R[2,\alpha]$, and an optimal solution $Z$ for $\sssp(u)$ of the form $Z=A \cup B_L \cup B_R$, such that $A$ is a cover for $(w_L,w_R)$ rooted at $u$, $B_L$ is a solution for $\sssp^L(w_L)$ and $B_R$ is a solution for 
	$\sssp^R(w_R)$. (We also need the converse for our proof, but we do not state it here explicitly.)
	\end{Lemma}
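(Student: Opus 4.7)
The plan is to reduce this lemma to two independent applications of the earlier decomposition lemma (\autoref{lm:decomposition}), one on each side of $u$. Since $u\in U_1$ and the BFS layering forbids edges between $L[2,\alpha]$ and $R[2,\alpha]$ (vertices at distance two or more from $s$ lie entirely on one side of $s$), every solution $\optsol$ for $\sssp(u)$ splits canonically into a sub-tree $\optsol_L$ lying in $\{u\}\cup L[2,\alpha]$ and serving all terminals of $L[2,\alpha]$, and a sub-tree $\optsol_R$ lying in $\{u\}\cup R[2,\alpha]$ and serving all terminals of $R[2,\alpha]$, the two sub-trees sharing only $u$. Every branching vertex of $\optsol$ distinct from $u$ is a branching vertex of exactly one of these sub-trees, so the branching counts add. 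This observation is what allows the two sides to be treated independently.

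The next step is to apply the proof of~\autoref{lm:decomposition} to each sub-tree separately. For the left side, if $\optsol_L$ has a branching vertex other than $u$, let $w_L$ be the one closest to $u$, suppose $w_L\in L_i$, let $x_L$ be the topmost vertex of $\optsol_L$ in $L_i$, delete every edge of $\optsol_L$ leaving $L_i$, and reattach every former child of those vertices in $L_{i+1}$ as a child of $x_L$. If $\optsol_L$ has no branching vertex beyond $u$, let $w_L$ be a terminal of $L[2,\alpha]$ in the deepest layer reached by $\optsol_L$ (one exists since $L[2,\alpha]\cap T\neq\emptyset$). Perform the symmetric construction on $\optsol_R$ to obtain $w_R$ (suppose $w_R\in R_j$). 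Let $A$ be the union of the two modified sub-trees restricted to the layers $U_1\cup L[2,i]\cup R[2,j]$, and let $B_L$, $B_R$ be the restrictions of the modified sub-trees to the strictly deeper layers on each side.

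It remains to verify the three required properties. By the construction of $w_L$ and $w_R$, every vertex of $A$ other than $u$ has out-degree at most $1$ inside $A$, so the $u$-to-target paths of $A$ are internally vertex-disjoint; this exhibits $A$ as a cover for $(w_L,w_R)$ rooted at $u$. The sub-trees $B_L$ and $B_R$ are, by construction, shortest-path trees rooted at $w_L$ and $w_R$ respectively that reach all terminals in $L[i+1,\alpha]$ and $R[j+1,\alpha]$, so they are valid solutions for $\sssp^L(w_L)$ and $\sssp^R(w_R)$. Finally, exactly as in the proof of~\autoref{lm:decomposition}, the flattening step never increases the branching count on either side, so $Z=A\cup B_L\cup B_R$ remains optimal.

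The main potential obstacle is confirming that the left-side and right-side flattenings do not interfere with one another. This is where the disconnection of $L[2,\alpha]$ from $R[2,\alpha]$ in $\GBFS$ becomes essential: the two sub-trees are edited on disjoint vertex sets, and since branching counts add across the sides, optimality is preserved when the local edits on each side are individually non-increasing. Once this is in place, the argument is a clean parallel extension of the proof of~\autoref{lm:decomposition}, and the converse direction (needed for the dynamic-programming correctness but not stated in the lemma) is again immediate.
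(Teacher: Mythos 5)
The paper gives no proof of this lemma (it is stated with ``Proof omitted''), so the only possible comparison is against the evidently intended argument, which is exactly what you give: split the solution at $u$ into its left and right subtrees using the fact that $\GBFS$ has no edges between $L[2,\alpha]$ and $R[2,\alpha]$, observe that branching counts add across the two sides, and run the flattening argument from the proof of~\autoref{lm:decomposition} independently on each side. Your proof is correct and is essentially the paper's single-sided argument transplanted to the two-sided setting; the only details to tighten are that ``topmost'' must be mirrored on the left side (the dominating interval in a layer of $L$ is the one reaching furthest \emph{left}, not the one with the largest right endpoint), and that in the branchless case $w_L$ should be taken to be the deepest \emph{terminal} reached by $\optsol_L$ rather than a terminal in the deepest layer reached, since the latter need not contain one.
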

	(Proof omitted.)
	Once we have such a solution, we can set the value of $D[u]$ to be the minimum of $D_1[u]$ and the minimum value taken by $D[w_L] + D[w_R] +1$, as $(w_L,w_R)$ ranges over $L[2,\alpha] \times R[2,\alpha]$.
	
	Finally, we consider the problem $\sssp(s)$ of constructing a tree rooted at the source $s$ that provides shortest paths to all terminals; let $D_1[s]$ be the minimum number of branching vertices in a solution where $s$ has out-degree $1$; let $D[s]$ the number of branching vertices in a solution. For computing $D_1[s]$ we proceed along expected lines, with different cases based on how many terminals there are in $U_1$. For $D[s]$, we again rely on a decomposition.
	\begin{Lemma}
	There is an optimal solution of $\sssp(s)$ of one of the following forms.
	\begin{enumerate}
	    \item[(a)] There is a vertex $u \in U_1$ and the solution consists of the edge $(s,u)$ together with a solution $B$ to $\sssp(u)$;
	    \item[(b)] There is a pair of vertices $(w_L,w_R) \in U[1,\alpha] \times U[1,\alpha]$
	    and the solution has the form $Z=A \cup B_L \cup B_R$, where 
	    $A$ is a cover for $(w_L,w_R)$ rooted at $s$, $B_L$ is a solution for $\sssp^L(w_L)$ and $B_R$ is a solution for $\sssp^R(w_R)$;
	    \item[(c)] There is a vertex $u \in U_1$, a pair of vertices $(w_L,w_R) \in L[2,\alpha] \times[(c)] R[2,\alpha]$ and a solution of the form $Z = \{(s,u)\} \cup A \cup B_L \cup B_R$, where $B_L$ is a solution to $\sssp^L(w_L)$, $B_R$ is a solution to $\sssp^R(w_R)$, and $A$ is a cover for $(w_L,w_R)$ rooted at the the pair
	    $(s,u)$ (that is, $A$ consists of vertex disjoint paths, each originating at either $s$ or $u$, that include all terminals in the layers up to $w_L$ and $w_R$ and
	    also include $w_L$ and $w_R$.
	\end{enumerate}
	We also have the appropriate converse statement in each of the three cases.
	\end{Lemma}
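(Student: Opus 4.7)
The plan is to mirror the proof of the earlier Decomposition Lemma, but applied separately to the $L$-side and $R$-side of $s$, and then classify the resulting structure at the top based on the branching pattern at $s$ and in $U_1$. Fix an optimal solution $Z$ for $\sssp(s)$. If $s$ has out-degree $1$ in $Z$, let $u \in U_1$ be its unique out-neighbor; then $B := Z \setminus \{(s,u)\}$ is a valid solution for $\sssp(u)$, giving form (a).

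Henceforth assume $s$ has out-degree $\geq 2$. Following the earlier Decomposition Lemma verbatim on each side, I identify $w_L \in L[2,\alpha]$ as the closest branching vertex of $Z$ on the $L$-side (or the deepest $L$-side terminal if $Z$ has no $L$-side branching), and apply the topmost-vertex trick to consolidate all $Z$-edges crossing $w_L$'s layer so that they emanate from $w_L$; do the same on the $R$-side to obtain $w_R$. This preserves optimality and lets me write $Z = A' \cup B_L \cup B_R$, where $B_L$ solves $\sssp^L(w_L)$, $B_R$ solves $\sssp^R(w_R)$, and $A'$ is the ``top'' subtree containing $U_0 \cup U_1$, all intermediate terminals, and $w_L, w_R$. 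By construction the only possible branching vertices of $Z$ sitting inside $A'$ are $s$ and vertices of $U_1$. If $A'$ has no branching vertex in $U_1$, then $A'$ consists of vertex-disjoint paths from $s$ and is itself a cover for $(w_L, w_R)$ rooted at $s$, yielding form (b).

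The remaining task, and the main obstacle of the proof, is form (c): if $A'$ has one or more branching vertices in $U_1$, I must show that we may assume there is exactly one. I plan to exploit the interval structure: every $u \in U_1$ is adjacent to $s$, so any out-edge $(u_1, v)$ in $A'$ with $v \in L_2 \cup R_2$ that I wish to redirect can be replaced by a length-two detour $s \to u' \to v$ for some $U_1$-neighbor $u'$ of $v$ not already used as a branching vertex in $A'$; such a $u'$ exists because $v$ is at distance $2$ from $s$, and a pigeonhole-style argument on the nested $L_2$- and $R_2$-neighborhoods in the BFS order of $U_1$ shows that if none were available we could already have avoided branching at $u_1$. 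The detour adds no new branching vertex, since $s$ is already branching in $A'$ and $u'$ has out-degree $1$. Sweeping across the $U_1$-branchers of $A'$ in BFS order and repeatedly redirecting all-but-one of each brancher's out-edges reduces $A'$ to have a single distinguished $u \in U_1$ that branches, without increasing the total branching count. The outcome is $A' = \{(s,u)\} \cup A$ where $A$ is a cover for $(w_L, w_R)$ rooted at the pair $(s, u)$, giving form (c). The converse direction in each of (a), (b), and (c) follows directly from the definitions of cover, $\sssp^L$, and $\sssp^R$, since $B_L$ and $B_R$ supply the shortest paths into the deeper $L$- and $R$-layers and $A$ (together with the edge $(s,u)$ in case (c)) supplies them in the top layers.
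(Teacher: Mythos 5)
The paper itself states this lemma without proof (the preceding lemma is explicitly marked ``Proof omitted''), so there is no official argument to compare against; judged on its own terms, your proposal breaks down exactly at the step you flag as the main obstacle. The claim that all but one of the $U_1$-branchers of $A'$ can be eliminated by rerouting their out-edges as detours $s\to u'\to v$ is false. Take $s=(0,1)$, $U_1=\{u_1,u_2\}$ with $u_1=(-10,1.5)$ and $u_2=(-0.5,10)$, and terminals $a=(-11,-9)$, $b=(-10.5,-9.5)$ in $L_2$ and $c=(9,11)$, $d=(9.5,10.5)$ in $R_2$. Then $u_1$ is the \emph{only} common neighbour of $s$ and $a$ (likewise of $s$ and $b$), and $u_2$ is the only common neighbour of $s$ and $c$ (likewise of $s$ and $d$), so every solution contains the edges $su_1,su_2,u_1a,u_1b,u_2c,u_2d$ and both $u_1$ and $u_2$ have out-degree two. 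No rerouting is available, and no solution with a single distinguished brancher in $U_1$ exists. The lemma absorbs this situation through form (b): note that there $(w_L,w_R)$ ranges over $U[1,\alpha]\times U[1,\alpha]$, so $w_L$ and $w_R$ may themselves lie in $U_1$ (here $w_L=u_1$, $w_R=u_2$, $A=\{su_1,su_2\}$, and the branching at $u_1,u_2$ is charged to $B_L,B_R$). Your construction pins $w_L$ to $L[2,\alpha]$ and $w_R$ to $R[2,\alpha]$ from the outset (``the closest branching vertex of $Z$ on the $L$-side''), so it can never produce this version of form (b), and it then tries to force form (c), which is impossible here.

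Even where rerouting is feasible, two further problems remain. First, a vertex $u'\in U_1$ that is ``not already used as a branching vertex'' may still have out-degree one in $A'$, and handing it a redirected edge then creates a \emph{new} branching vertex, so the sweep is not cost-neutral as claimed. Second, the ``pigeonhole-style argument on the nested neighbourhoods'' is asserted rather than proved, and it cannot be made to work as stated: the $L_2$-neighbourhoods of $U_1$ are nested in the order of left endpoints while the $R_2$-neighbourhoods are nested in the order of right endpoints, and these two orders are unrelated, which is precisely why one vertex of $U_1$ can be forced to branch leftward while a \emph{different} one is forced to branch rightward. A correct case analysis should distinguish: no vertex of $U_1$ branches (form (b) with $w_L,w_R$ in deeper layers); the first branch points for the two sides are distinct vertices, possibly both in $U_1$ (form (b) with $w_L,w_R\in U[1,\alpha]$); and a single vertex $u\in U_1$ is the first branch point for both sides simultaneously (form (c)). Your proof collapses the last two cases into one and thereby misses the second.
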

	With this, we can compute $D[s]$. For case (a), we have already computed $D_1[s]$; for case (b), we take the minimum of $D[w_L] + D[w_R] + 1$ as we range over all feasible choices of $(w_L,w_R)$; for case (c), we take the minimum of $D[w_L] + D[w_R] + 2$ as we range over all feasible choices of $(u,w_L,w_R)$. This leads to a polynomial time dynamic programming solution for $\sssp(s)$.

	
	

    \section{All-pairs Distance-preserving Subgraphs}

    Interval graphs are ``one-dimensional" by nature. It is therefore interesting to consider two-dimensional analogues of interval graphs. There are several different ways to generalize interval graphs into two (or more) dimensions, and many of them have been studied in literature.
    
    The main theorem for this section is the following, which is simply~\autoref{main2} restated.
    
    \begin{Theorem} [All-pairs bi-interval graphs] \ 
	    \begin{enumerate}
			\item[(a)] There is a polynomial time algorithm that, given a bi-interval graph with $k$ terminals as input, produces a distance-preserving subgraph of it with $O(k^2)$ branching vertices.
			\item[(b)] For every $k\geq 4$, there is a bi-interval graph $\Gdiag$ on $k$ terminals such that every distance-preserving subgraph of $\Gdiag$ has $\Omega(k^2)$ branching vertices.
		\end{enumerate}
	\end{Theorem}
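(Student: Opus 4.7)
The plan for part (a) is to exploit the product structure of bi-interval graphs. Each vertex of a bi-interval graph $G$ can be identified with a pair $(x,y)$ where $x$ comes from an $X$-interval family and $y$ comes from a $Y$-interval family, and two vertices are adjacent iff both coordinates overlap. A key observation (to be established as a lemma) is that any shortest path between terminals $t=(x,y)$ and $t'=(x',y')$ admits a canonical decomposition into a maximal \emph{diagonal} stretch, during which both coordinates advance simultaneously in the greedy sense of \autoref{greedyG}, followed by a single \emph{straight} tail that advances only in the $X$ or only in the $Y$ direction. I would prove this by inducting on the path length, using the one-dimensional interval greedy step in both coordinates and showing that once one coordinate ``completes'' the other can continue alone without loss of optimality.

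Given this structural lemma, the algorithm has two phases. In the first phase I route, for each of the $\binom{k}{2}$ ordered terminal pairs, a canonical diagonal path using the 2D greedy rule. Two such diagonal paths either stay disjoint or, when they meet, can be forced to merge into a common suffix; a careful charging argument would then show that the union of all diagonal paths has $O(k^2)$ vertices of degree $\ge 3$ (one potential branching per pair of diagonals). In the second phase, the remaining straight tails need to be attached. Since each tail lives entirely in one row of $X$-intervals or one column of $Y$-intervals, restricting to a single row or column reduces the residual task to an all-pairs distance-preserving subgraph problem on an \emph{interval graph} whose terminals are the diagonal endpoints that land in that row/column plus the original terminals. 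Invoking the $O(m \log m)$ construction from \cite{jaikumar} for each row/column, and summing $\sum_i m_i \log m_i$ using the fact that $\sum_i m_i = O(k^2)$ (each terminal pair contributes one straight tail), yields an overall $O(k^2)$ branching-vertex bound. The algorithm itself is polynomial time because the 2D greedy paths are computable in polynomial time and the 1D subroutine of \cite{jaikumar} is polynomial.

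For part (b) I would take $\Gdiag$ to be the strong product of two copies of the regularly placed interval graph used in the $\Omega(k\log k)$ lower bound of \cite{jaikumar}, with terminals placed along the main diagonal, $t_i = (i,i)$ for $1 \le i \le k$. For every ordered pair $(t_i,t_j)$ with $i<j$ there is a unique distance $d(t_i,t_j) = j-i$, and the set of shortest $t_i$-$t_j$ paths is forced to traverse a combinatorially constrained ``staircase'' region. An adversarial counting argument, analogous to (and strengthening) the 1D lower bound, would then show that any distance-preserving subgraph must contain a distinct ``turn'' vertex for $\Omega(k^2)$ of these pairs, and each such turn vertex is a branching vertex.

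The hard part will be the accounting in phase one of the algorithm: showing that the union of the $\binom{k}{2}$ diagonal greedy paths has only $O(k^2)$ branching vertices rather than the naive $O(k^3)$ obtained by counting pairs-times-length. This requires a structural merging lemma saying that once two diagonal greedy paths with the same direction of progress agree at a vertex, they can be made to coincide thereafter, so each unordered pair of diagonals contributes at most one branching vertex. A secondary subtlety is ensuring that the interval-graph subproblems posed in phase two do not interact across different rows/columns in a way that double-counts branching vertices at shared diagonal endpoints; this would be handled by declaring the diagonal endpoints in each row/column as additional terminals before invoking the \cite{jaikumar} construction.
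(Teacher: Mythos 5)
Your high-level architecture for part (a) — canonical diagonal greedy segments followed by straight tails that reduce to interval-graph subproblems — is the same as the paper's, but two of your accounting steps do not close, and both hide the actual content of the proof. First, in phase one you route a diagonal for each of the $\binom{k}{2}$ terminal pairs and charge ``one potential branching per pair of diagonals''; with $\Theta(k^2)$ diagonals that is $\Theta(k^4)$ pairs, not $O(k^2)$. The paper instead adds only $2k$ cardinal paths, one \texttt{northeast} and one \texttt{northwest} path per terminal, and uses the fact that every diagonal segment needed for any pair is a prefix of one of these (greedy paths from a common source in a common direction coincide, which is the deterministic version of your merging lemma); then $\binom{2k}{2}$ pairs of paths, each meeting in at most two branch points, gives $O(k^2)$. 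You must also commit to a direction convention: if for some pair you route a \texttt{southeast} or \texttt{southwest} diagonal you create an anti-parallel pair of greedy paths, and the paper shows (\autoref{antiparallel}) that even one anti-parallel pair can force $\Omega(n)$ branching vertices, unbounded in $k$. Second, in phase two you invoke the symmetric all-pairs $O(m\log m)$ construction of \cite{jaikumar} in each row/column on $m_i$ points with $\sum_i m_i = O(k^2)$; since $m_i$ can be $\Theta(k)$, the sum $\sum_i m_i\log m_i$ is only $O(k^2\log k)$, which misses the claimed bound. The pseudo-terminals need paths only to the genuine terminals of their row/column, not to each other, so the paper uses an asymmetric source/target subroutine (\autoref{intpaths}) with $O(|S|\cdot|T|)$ branching vertices, where $S$ is the set of true terminals in that row/column; the total $\sum_i \row_i(\row_i+\pseudo_i)$ is then $O(k^2)$ because $\sum_i \row_i = k$ and $\pseudo_i = O(k)$.

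For part (b) the gap is more basic: placing all $k$ terminals on the main diagonal cannot force $\Omega(k^2)$ branching vertices. In the strong product, the unique shortest path between $(i,i)$ and $(j,j)$ is a segment of that one diagonal, so the union of everything your construction forces is a single path of $O(k)$ vertices with no branching at all, and no ``staircase'' counting argument can recover a quadratic bound from it. What is needed is two transversal families of forced diagonals. The paper takes a $k\times k$ king's graph (\autoref{king}) and places the terminals at the black squares of the boundary: any two terminals on a common black diagonal have that diagonal as their \emph{unique} shortest path, so every interior black square, being the crossing point of a forced \texttt{northeast} diagonal and a forced \texttt{northwest} diagonal, has degree at least $4$ in every distance-preserving subgraph. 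That yields $\Omega(k^2)$ branching vertices with only $O(k)$ terminals.
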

	
	We shall see a proof of this shortly. First, let us examine some other 2D analogues of interval graphs.

    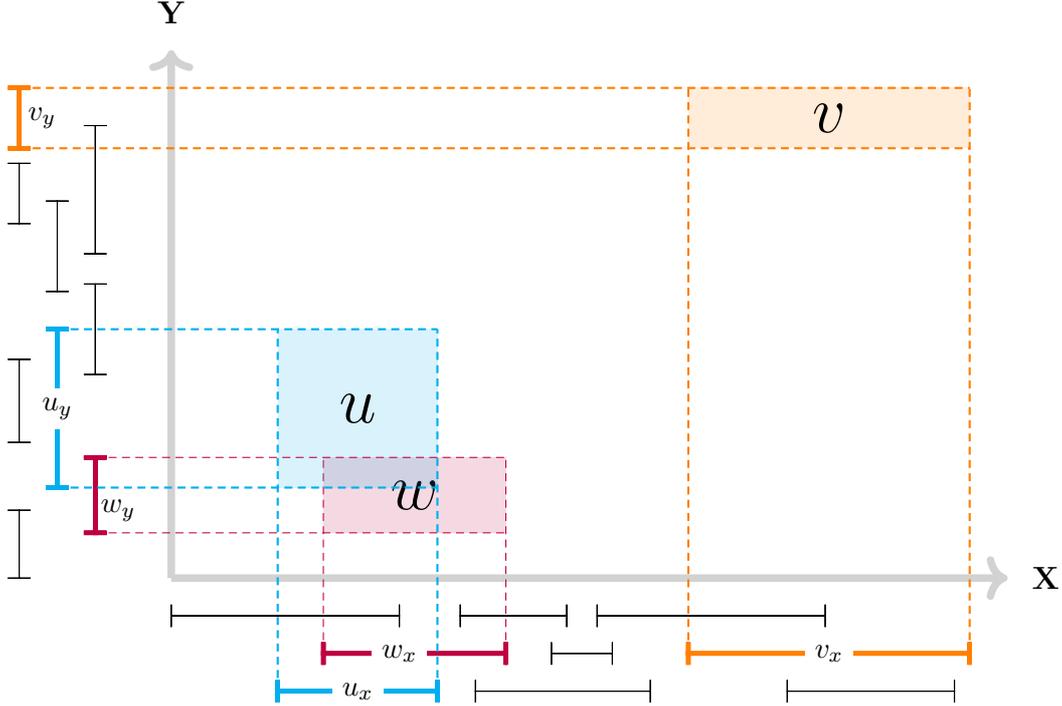
\begin{figure}
    \begin{center}
    \begin{tikzpicture}
    
    \draw [->, gray!35, line width=1mm] (0,0) to (11,0);
    \node at (11.5,0) {\large{\bf X}};
    \draw [->, gray!35, line width=1mm] (0,0) to (0,7);
    \node at (0,7.5) {\large{\bf Y}};
    
    \def \d {0.15};
    \def \colouru {cyan};
	\def \colourv {orange};
	\def \colourw {purple};

	
	\def \x      {{ 0  , 2  , 3.8, 4  , 5  , 5.6, 8.1, 6.8, 1.4}};
	\def \y      {{-0.5,-1  ,-0.5,-1.5,-1  ,-0.5,-1.5,-1  ,-1.5}};
	\def \length {{ 3  , 2.4, 1.4, 2.3, 0.8, 3  , 2.2, 3.7, 2.1}};
	
	\foreach \i in {0,1,...,6}
	    \draw [line width=0.1mm] (\x[\i],\y[\i]+\d)--(\x[\i],\y[\i]-\d)--(\x[\i],\y[\i])--(\x[\i]+\length[\i],\y[\i])--(\x[\i]+\length[\i],\y[\i]-\d)--(\x[\i]+\length[\i],\y[\i]+\d)--(\x[\i]+\length[\i],\y[\i])--(\x[\i],\y[\i])--(\x[\i],\y[\i]+\d);

    \def \i {1};
    \draw [line width=0.6mm,\colourw] (\x[\i],\y[\i]+\d)--(\x[\i],\y[\i]-\d)--(\x[\i],\y[\i])--(\x[\i]+\length[\i],\y[\i])--(\x[\i]+\length[\i],\y[\i]-\d)--(\x[\i]+\length[\i],\y[\i]+\d)--(\x[\i]+\length[\i],\y[\i])--(\x[\i],\y[\i])--(\x[\i],\y[\i]+\d);
    
    \draw [densely dashed,thin,\colourw] (\x[\i],\y[\i])--(\x[\i],1.6);
    \draw [densely dashed,thin,\colourw] (\x[\i]+\length[\i],\y[\i])--(\x[\i]+\length[\i],1.6);
    
    \fill[fill=\colourw, opacity=0.15](\x[\i],0.6) rectangle +(2.4,1);
    \node at (\x[\i]+1.2,1.1) {\Huge{$w$}};
    
    \def \i {7};
    \draw [line width=0.6mm,\colourv] (\x[\i],\y[\i]+\d)--(\x[\i],\y[\i]-\d)--(\x[\i],\y[\i])--(\x[\i]+\length[\i],\y[\i])--(\x[\i]+\length[\i],\y[\i]-\d)--(\x[\i]+\length[\i],\y[\i]+\d)--(\x[\i]+\length[\i],\y[\i])--(\x[\i],\y[\i])--(\x[\i],\y[\i]+\d);
    
    \draw [densely dashed,thick,\colourv] (\x[\i],\y[\i])--(\x[\i],6.5);
    \draw [densely dashed,thick,\colourv] (\x[\i]+\length[\i],\y[\i])--(\x[\i]+\length[\i],6.5);
    
    \fill[fill=\colourv, opacity=0.15](\x[\i],5.7) rectangle +(3.7,0.8);
    \node at (\x[\i]+1.85,6.1) {\Huge{$v$}};
    
    \def \i {8};
    \draw [line width=0.6mm,\colouru] (\x[\i],\y[\i]+\d)--(\x[\i],\y[\i]-\d)--(\x[\i],\y[\i])--(\x[\i]+\length[\i],\y[\i])--(\x[\i]+\length[\i],\y[\i]-\d)--(\x[\i]+\length[\i],\y[\i]+\d)--(\x[\i]+\length[\i],\y[\i])--(\x[\i],\y[\i])--(\x[\i],\y[\i]+\d);
	
	\draw [densely dashed,thick,\colouru] (\x[\i],\y[\i])--(\x[\i],3.3);
    \draw [densely dashed,thick,\colouru] (\x[\i]+\length[\i],\y[\i])--(\x[\i]+\length[\i],3.3);
	
	\fill[fill=\colouru, opacity=0.15](\x[\i],1.2) rectangle +(2.1,2.1);
	\node at (\x[\i]+1.05,2.25) {\Huge{$u$}};
	
	
	\def \y      {{ 0  , 0.6, 1.8, 2.7, 3.8, 4.3, 4.7, 5.7, 1.2}};
	\def \x      {{-2  ,-1  ,-2  ,-1  ,-1.5,-1  ,-2  ,-2  ,-1.5}};
	\def \length {{ 0.9, 1  , 1.1, 1.2, 1.2, 1.7, 0.8, 0.8, 2.1}};
	
    \foreach \i in {0,1,...,6}
	    \draw [line width=0.1mm] (\x[\i]+\d,\y[\i])--(\x[\i]-\d,\y[\i])--(\x[\i],\y[\i])--(\x[\i],\y[\i]+\length[\i])--(\x[\i]-\d,\y[\i]+\length[\i])--(\x[\i]+\d,\y[\i]+\length[\i])--(\x[\i],\y[\i]+\length[\i])--(\x[\i],\y[\i])--(\x[\i]+\d,\y[\i]);

    \def \i {1};
    \draw [line width=0.6mm,\colourw] (\x[\i]+\d,\y[\i])--(\x[\i]-\d,\y[\i])--(\x[\i],\y[\i])--(\x[\i],\y[\i]+\length[\i])--(\x[\i]-\d,\y[\i]+\length[\i])--(\x[\i]+\d,\y[\i]+\length[\i])--(\x[\i],\y[\i]+\length[\i])--(\x[\i],\y[\i])--(\x[\i]+\d,\y[\i]);
    
    \draw [densely dashed,thin,\colourw] (\x[\i],\y[\i])--(4.4,\y[\i]);
    \draw [densely dashed,thin,\colourw] (\x[\i],\y[\i]+\length[\i])--(4.4,\y[\i]+\length[\i]);

    \def \i {7};
    \draw [line width=0.6mm,\colourv] (\x[\i]+\d,\y[\i])--(\x[\i]-\d,\y[\i])--(\x[\i],\y[\i])--(\x[\i],\y[\i]+\length[\i])--(\x[\i]-\d,\y[\i]+\length[\i])--(\x[\i]+\d,\y[\i]+\length[\i])--(\x[\i],\y[\i]+\length[\i])--(\x[\i],\y[\i])--(\x[\i]+\d,\y[\i]);
    
    \draw [densely dashed,thick,\colourv] (\x[\i],\y[\i])--(10.5,\y[\i]);
    \draw [densely dashed,thick,\colourv] (\x[\i],\y[\i]+\length[\i])--(10.5,\y[\i]+\length[\i]);
    
    \def \i {8};
    \draw [line width=0.6mm,\colouru] (\x[\i]+\d,\y[\i])--(\x[\i]-\d,\y[\i])--(\x[\i],\y[\i])--(\x[\i],\y[\i]+\length[\i])--(\x[\i]-\d,\y[\i]+\length[\i])--(\x[\i]+\d,\y[\i]+\length[\i])--(\x[\i],\y[\i]+\length[\i])--(\x[\i],\y[\i])--(\x[\i]+\d,\y[\i]);
    
    \draw [densely dashed,thick,\colouru] (\x[\i],\y[\i])--(3.5,\y[\i]);
    \draw [densely dashed,thick,\colouru] (\x[\i],\y[\i]+\length[\i])--(3.5,\y[\i]+\length[\i]);
    
    \node[fill=white] at (2.45,-1.5) {$u_x$};
    \node[fill=white] at (-1.5,2.25) {$u_y$};
    
    \node[fill=white] at (8.65,-1) {$v_x$};
    \node at (-1.7,6.1) {$v_y$};
    
    \node[fill=white] at (3,-1) {$w_x$};
    \node at (-0.7,0.9) {$w_y$};
    
    
    \end{tikzpicture}
    \caption{Three vertices $u,v,w$, represented as rectangles in a bi-interval graph. Their corresponding intervals in the interval graphs $X$ and $Y$ are $u_x,v_x,w_x$ and $u_y,v_y,w_y$, respectively. Note that $(u,w)$ is an edge in the bi-interval graph since the rectangles $u$ and $w$ overlap, but $(u,v)$ and $(v,w)$ are not.}
    \label{fig:biinter}
    \end{center}
    \end{figure}

    \subsection{Two-dimensional Analogues of Interval Graphs}
    
    A bi-interval graph $G$ is determined by two families of intervals $\mathcal{I}_X$ and $\mathcal{I}_Y$. There is a vertex  $v_{a,b}$ in  $V(G)$ for each pair $(a,b) \in \mathcal{I}_X \times \mathcal{I}_Y$. This criterion used to connect two such vertices by an edge has a natural geometric interpretation:  two vertices are adjacent if the rectangles associated with them intersect. Formally, $(a_1,b_1)$ and $(a_2, b_2)$ are connected by an edge if and only if $(a_1 \times b_1) \cap (a_2 \times b_2) \neq \emptyset$. Thus, bi-interval graphs are rectangle intersection graphs, where the rectangles that appear have a certain product structure induced by two families of intervals (see~\autoref{fig:biinter}).
    
    Bi-interval graphs are a natural generalization of interval graphs in the sense that they have boxicity two~\footnote{The boxicity of a graph is the minimum dimension in which a given graph can be represented as an intersection graph of axis-parallel boxes.}, and interval graphs by definition are precisely the graphs having boxicity one. Thus, it might be interesting to study the class of graphs having boxicity two (or higher). However, the following theorem shows that not much can be done for such graphs.
    
    \begin{Theorem} \label{boxicity}
    For every $k\geq 4$, there exists a graph having boxicity two on $k$ terminals for which every distance-preserving subgraph has $\Omega(k^4)$ branching vertices.
    \end{Theorem}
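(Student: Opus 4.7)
The plan is to construct, for each $k \ge 4$, a specific boxicity-two graph $G$ with $k$ terminals whose distance-preserving subgraphs must contain $\Omega(k^4)$ branching vertices. The key idea is to realize a \emph{line-arrangement backbone} inside a rectangle intersection graph: $n := \binom{k}{2}$ combinatorial curves, one per terminal pair, placed in general position in the plane so that each pair of curves meets in exactly one interior point and no three curves meet at a common point. A general-position arrangement of $n$ lines has $\binom{n}{2} = \Theta(k^4)$ crossings, each of which will be forced to become a branching vertex.

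\textbf{Combinatorial skeleton.} I would first describe $G$ abstractly. For each of the $n = \binom{k}{2}$ terminal pairs $\{t_i,t_j\}$, fix a distinct line $\ell_{ij}$ in the plane in general position. Discretize $\ell_{ij}$ into a path $P_{ij}$ of Steiner vertices with exactly one vertex at each intersection $\ell_{ij}\cap \ell_{i'j'}$; this vertex is identified across the two paths containing it. Attach $t_i$ and $t_j$ at the two ends of $P_{ij}$. Each $P_{ij}$ has length $\Theta(n)=\Theta(k^2)$, and the adjacencies are tuned so that $P_{ij}$ is the unique shortest $t_i$-to-$t_j$ path in $G$.

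\textbf{Rectangle realization.} Next I would exhibit an axis-parallel rectangle representation of $G$, certifying boxicity at most two. Each line $\ell_{ij}$ is drawn as a thin staircase of small axis-parallel rectangles whose only pairwise overlaps are between consecutive rectangles of the staircase. At each planned crossing $\ell_{ij}\cap \ell_{i'j'}$, a tiny ``crossing'' rectangle is inserted that overlaps only the two local staircase rectangles meeting there. Terminals are represented by small rectangles glued to the ends of their staircases. The resulting intersection graph is exactly $G$.

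\textbf{Lower bound.} In any distance-preserving subgraph $G'\subseteq G$, each of the $\binom{k}{2}$ shortest paths $P_{ij}$ must be retained in full, by uniqueness of shortest paths. Hence $G'$ contains all $\Theta(k^4)$ crossing Steiner vertices. At each crossing, two paths $P_{ij}$ and $P_{i'j'}$ both pass through and depart in distinct directions of the arrangement, so the crossing has degree at least $4$ in $G'$ and is therefore a branching vertex. This yields the required $\Omega(k^4)$ bound.

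\textbf{Main obstacle.} The principal technical challenge will be enforcing shortest-path uniqueness in the rectangle representation. Because overlapping axis-parallel rectangles induce cliques, a careless staircase can be short-cut: rectangles of different staircases that happen to overlap would collapse a long path into a few edges, and two lines with very close slopes would yield staircases with long runs of unintended incidences. The bulk of the work lies in carefully choosing the rectangle widths, offsets, and slopes of the staircases so that only the intended local overlaps occur, and so that the arrangement distance faithfully reflects the shortest-path distance in the rectangle intersection graph. Once this distance control is in place, the branching-vertex count is immediate from the line-arrangement crossing count.
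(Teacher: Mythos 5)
Your construction is, at its core, the same object the paper uses: the paper's proof is a short sketch that takes the weighted planar graph of Krauthgamer et al.~\cite[Section 5]{Ngu} --- which is precisely the ``one path per terminal pair, paths pairwise crossing, $\Theta(k^4)$ forced crossing vertices'' example --- subdivides its edges to remove the weights, and then asserts that this planar graph can be realized by axis-parallel rectangles so that it has boxicity two. You rebuild that example from scratch as a line arrangement and realize it directly with rectangle staircases, rather than citing it; the logical skeleton (uniqueness of the $\binom{k}{2}$ shortest paths forces every crossing to survive with degree $\geq 4$) is identical. Two remarks on your version. First, a minor inaccuracy: if the curves are the segments joining terminal pairs, you cannot make \emph{every} pair of segments cross (two chords of a convex polygon cross only when their endpoints interleave); but this does not matter, since $k$ terminals in convex position already give $\binom{k}{4}=\Theta(k^4)$ crossings, one per $4$-subset, which is all you need. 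Second, the two obstacles you flag --- controlling graph distance after unweighted discretization (rounding errors accumulate over $\Theta(k^2)$ segments, so the subdivision scale must dominate the minimum Euclidean detour gap) and preventing shortcuts between staircases of nearly parallel lines --- are genuine and are not resolved in your writeup; to be fair, the paper's own proof sketch does not resolve them either, deferring the distance control to~\cite{Ngu} and asserting the rectangle realization without detail. So your proposal matches the paper's argument in substance and in its level of rigor, with the added value of making the intended rectangle representation explicit.
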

     
    \begin{proof}[Proof sketch] It is easy to see that the weighted planar graph $G$ presented in~\cite[Section 5]{Ngu} can be made unweighted (by subdividing the edges) so that every distance-preserving subgraph of $G$ has $\Omega(k^4)$ branching vertices. Now that $G$ is unweighted and planar, it is possible to replace its vertices by axis-parallel rectangles, so that $G$ now has boxicity two.
    \end{proof}
    
    In this paper, we study the class of unweighted bi-interval graphs. It is also reasonable to consider bi-interval graphs with non-negative real edge weights. Using earlier work by Krauthgamer \emph{et al.}~\cite{Ngu}, it can be shown that for every $k\geq 4$, there exists a \emph{weighted} bi-interval graph~\footnote{Every interval graph is a bi-interval graph.} on $k$ terminals for which every distance-preserving subgraph has $\Omega(k^4)$ branching vertices (see~\cite[Corollary 8 (b)]{kshitij}).
	
	None of these lower bounds can be improved apart from constant factors, since an $O(k^4)$ upper bound exists for both weighted and unweighted graphs (see~\cite[Section 2.1]{Ngu}).

	\subsection{Proof of the Upper Bound for Bi-interval Graphs: Idea}
	
	In this section, we prove~\autoref{main2} (a). The proof is constructive by nature. Here we give a rough outline of the proof.
	
	\begin{Theorem} \label{main2aagain}
	There exists a polynomial time algorithm that, given a bi-interval graph $G$ with $k$ terminals, constructs a distance-preserving subgraph of $G$ with $O(k^2)$ branching vertices.
	\end{Theorem}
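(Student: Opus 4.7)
The plan is to exploit the product structure of $G$: for rectangles $u = u_x \times u_y$ and $v = v_x \times v_y$, one has $d_G(u,v) = \max\{d_X(u_x, v_x),\, d_Y(u_y, v_y)\}$, where $d_X, d_Y$ denote distances in the underlying interval graphs on $\cI_X, \cI_Y$. This holds because the vertex set of $G$ is the full product $\cI_X \times \cI_Y$, so at each step the two coordinates may be advanced independently. Consequently, every shortest $s$-to-$t$ walk can be canonicalized into a \emph{diagonal} phase (both coordinates advance) followed by a \emph{straight} phase (only the lagging coordinate advances). The algorithm builds these two kinds of paths separately and shows that each contributes $O(k^2)$ branching vertices.

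\emph{Phase 1 (diagonals).} For every ordered pair of terminals $(s,t)$ with, WLOG, $d_X(s_x, t_x) \leq d_Y(s_y, t_y)$, insert the canonical diagonal of length $d_X(s_x, t_x)$: from the current rectangle $(a,b)$, move to $(a', b')$ where $a'$ is the greedy successor of $a$ in $\cI_X$ (headed to $t_x$) and $b'$ is the greedy successor of $b$ in $\cI_Y$ (headed to $t_y$). The diagonal terminates at a rectangle of the form $(t_x, b^{(s,t)})$. Because all diagonals are determined by the same deterministic greedy rule, once two diagonals share a rectangle with identical remaining target distances in both coordinates they are forced to coincide from that point on; with careful tie-breaking this makes the collection of diagonals laminar, so each of the $O(k^2)$ ordered terminal pairs contributes only $O(1)$ new branching vertices, giving $O(k^2)$ in total.

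\emph{Phase 2 (straight completions).} For every pair $(s,t)$ treated in Phase 1, a purely vertical completion from $(t_x, b^{(s,t)})$ to $t$ is still required (symmetrically, a horizontal completion when $d_Y \leq d_X$). All vertical completions ending at $t$ live inside the single column $\{t_x\} \times \cI_Y$, which is isomorphic to the interval graph on $\cI_Y$. Within this column the task reduces to a single-sink distance-preserving subgraph with at most $k-1$ sources, and such a subgraph can be taken to be a tree with at most $k$ leaves, hence at most $k-2$ branching vertices. Summing over the at most $k$ distinct columns $\{t_x\}$ and $k$ distinct rows $\{t_y\}$ yields $O(k^2)$ branching vertices in Phase 2. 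Each column/row sub-instance is solvable in polynomial time (for example by a reversed-direction application of \autoref{main1again}), so the overall algorithm is polynomial.

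\emph{Main obstacle.} The only delicate step is the $O(k^2)$ bound in Phase 1: one must rule out that the $\Theta(k^2)$ diagonals create many extra meeting points beyond what pairs of terminals already account for. This requires choosing the tie-breaking rule of the two-dimensional greedy carefully (for instance, first by $\righty$, then by $\lefty$, then by a global vertex order) and an inductive argument showing that any coincidence between two canonical diagonals with matching remaining-distance profiles propagates forward to the end of both. Once this structural claim is established, the counting of branching vertices in both phases is routine, and the promised $O(k^2)$ upper bound follows.
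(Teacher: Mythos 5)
Your high-level architecture matches the paper's: a diagonal greedy phase followed by straight (row/column) completions at pseudo-terminals, with $O(k^2)$ branching vertices charged to each phase. But the step you flag as "the only delicate one" is exactly where your version breaks, and tie-breaking cannot repair it. By emitting one canonical diagonal per \emph{ordered} pair $(s,t)$, you include both the diagonal from $s$ toward $t$ (say, northeast) and the diagonal from $t$ toward $s$ (southwest); these are \emph{anti-parallel} greedy paths, and they are not reverses of one another. The paper's \autoref{antiparallel} exhibits a single pair of anti-parallel greedy paths sharing $\Omega(n)$ branching vertices -- unbounded in $k$ -- and the construction is robust to any tie-breaking or global vertex order. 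Your laminarity claim ("coincidence propagates forward") is true only for same-direction diagonals, where the greedy successor is destination-independent. The paper's fix is directional, not combinatorial: each terminal emits only two unbounded cardinal rays, \texttt{northeast} and \texttt{northwest} ($2k$ rays total, not $\Theta(k^2)$ diagonals), and every terminal pair is oriented so that its diagonal segment starts at the lower terminal and travels north. Same-direction rays merge permanently and a northeast ray meets a northwest ray at most once, giving $O(k^2)$ branching vertices; southeast/southwest rays are never created.

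Phase 2 has a second gap of the same flavour. You build, for each sink $t$, a separate in-tree in $t$'s column and bound the union by summing the $\leq k-2$ branching vertices of each tree. But the branching number of a union is not subadditive: two in-trees toward different sinks $t \prec t'$ in the same column contain straight greedy paths running in opposite directions (sources below $t'$ climbing up, sources above $t$ descending), and such anti-parallel segments can again intersect in $\Omega(n)$ vertices of degree $4$. The paper's \autoref{intpaths} circumvents this by building a \emph{single} tree $J'$ of greedy paths toward the topmost target and then realizing each remaining source--target distance by attaching one extra edge $(b_{i,j},t_j)$, so each of the $O(pq)$ pairs contributes at most two new branching vertices. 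Both of your gaps are fixable, but each requires importing the corresponding structural idea (monotone direction for diagonals; one master tree plus single-edge attachments for straight paths), not just more careful counting.
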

	
	In order to describe our construction of distance-preserving subgraphs of bi-interval graphs, it will be helpful to fix a method of constructing shortest paths in such graphs. \autoref{fig:pseudop} is helpful to navigate through this construction. It is well known and easy to prove that a greedy algorithm works for finding shortest paths in interval graphs. In a nutshell, we run this greedy algorithm on two interval graphs and put the resulting paths together.
	
	Suppose $G=X\boxtimes Y$ is a bi-interval graph (see~\autoref{def:biinterval}), and we need to construct a shortest path between the vertices $u=(u_x,u_y)$ and $v=(v_x,v_y)$ in $G$ (assume that $u_y\preceq v_y$). $\greedyX(u_x,v_x)=(u_x=i_0,i_1,i_2,\ldots,i_p=v_x)$ and $\greedyY(u_y,v_y)=(u_y=j_0,j_1,j_2,\ldots,j_q=v_y)$ be the greedy shortest paths from $u_x$ to $v_x$ in $X$ and from $u_y$ to $v_y$ in $Y$, respectively. Thus, $d_X(u_x,v_x)=p$ and $d_Y(u_y,v_y)=q$. Suppose $p \leq q$ (otherwise, exchange $X$ and $Y$). Then, $\greedyG(u,v) = ((i_0,j_0),(i_1,j_1),(i_2,j_2),\ldots,(i_p,j_p),(i_p,j_{p+1}),(i_p,j_{p+2}),\ldots,(i_p,j_{q-1}),(i_p,j_q))$ is a shortest path between $u$ and $v$ in $G$. This method can be visualized geometrically: starting from the rectangle $u$, embark on the greedy shortest paths in both $X$ and $Y$ simultaneously until the current rectangle lies in the same row or column as the destination rectangle (that is, move \emph{diagonally}); then, move optimally within the row or column in the corresponding interval graph~\footnote{Note that when restricted to a fixed row (or a fixed column), a bi-interval graph is simply an interval graph.} to reach the destination (\emph{horizontally} or \emph{vertically up}; we call such paths \emph{straight} paths). Since we assumed that $u_y \leq v_y$, the diagonal segments of these paths either go northeast or northwest (but never southeast or southwest~\footnote{We do not consider southeast or northwest paths because southeast is anti-parallel to northwest, and southwest is anti-parallel to northeast, and there does not seem to be any straightforward method to get a handle on the number of branching vertices in terms of $k$ by using anti-parallel paths (see~\autoref{antiparallel}).}).
	
	Our subgraph will consist of such paths for all pairs of terminals. First, we make provisions for the diagonal parts of the shortest paths, by independently moving northeast and northwest along greedy paths. Two paths, both proceeding northeast (or both proceeding northwest) from different terminals can meet, but once they have met they move in unison, never to diverge again. Furthermore, a path proceeding northeast and another proceeding northwest meet at most once and diverge immediately, never to meet again. Thus, by introducing at most $O(k^2)$ branching vertices, we succeed in making provision for all diagonal segments of shortest paths out of terminals.
	
	Two tasks still remain: (i) we must identify vertices at which these diagonal segments branch off into a vertical or a horizontal segment (potentially introducing a branching vertex) and arrive at a vertex (which we call \emph{pseudo-terminal}) in the row or column of another terminal; (ii) ensure that every pseudo-terminal is connected to every terminal in the row or column (with appropriate straight paths). The first task is straightforward for there are no choices to be made. The second task requires some care; we might have multiple terminals in the same column, which need shortest paths to all the pseudo-terminals in that column. To keep the number of branching vertices small, we borrow some ideas from the earlier analyses of distance-preserving subgraphs in interval graphs~\cite{jaikumar}.
	
	The total count of branching vertices breaks down as follows: (i) $O(k^2)$ when diagonal paths are added; (ii) $O(k^2)$ when diagonal paths branch off to join a straight path. Let us explain (ii) briefly. There are at most $O(k)$ pseudo-terminals in any row or column. We argue that within a row (or column) with $p$ terminals and $q$ pseudo-terminals, shortest paths can be provided by introducing $O(p(p+q))$ additional branching vertices. Since $p+q = O(k)$ and the total number of terminals is $k$, we need at most $O(k^2)$ branching vertices for (ii).
	
	We now describe this proof in detail.
	
	\subsection{Proof of the Upper Bound for Bi-interval Graphs: Implementation}

	In this section, we prove~\autoref{main2aagain}. Let us begin by formally defining bi-interval graphs. For this, we need the notion of strong graph products.
	
	\begin{Definition}[Strong product]
    Given two graphs $G_1$ and $G_2$, the strong graph product of $G_1$ and $G_2$, denoted by $G_1\boxtimes G_2$, is defined as follows.

    \begin{itemize}
        \item $V(G_1\boxtimes G_2)=V(G_1)\times V(G_2)$.
        \item Let $(u_1,u_2)$ and $(v_1,v_2)$ be two vertices of $G_1\boxtimes G_2$ such that $(u_1,u_2)\neq (v_1,v_2)$. Then $((u_1,u_2),(v_1,v_2))\in E(G_1\boxtimes G_2)$ if and only if one of the following is true.
        \begin{enumerate}
            \item $u_1=v_1$ in $G_1$ and $(u_2,v_2)\in E(G_2)$.
            \item $u_2=v_2$ in $G_2$ and $(u_1,v_1)\in E(G_1)$.
            \item $(u_1,v_1)\in E(G_1)$ and $(u_2,v_2)\in E(G_2)$.
        \end{enumerate}
    \end{itemize}
    In other words, two distinct vertices $(u_1,u_2)$ and $(v_1,v_2)$ are adjacent in $G_1\boxtimes G_2$ if and only if they are adjacent or equal in each coordinate.
    \end{Definition}
    
    We are now set to define bi-interval graphs. If $G_1$ and $G_2$ are interval graphs, then $G_1\boxtimes G_2$ is a bi-interval graph.
    
    \begin{Definition} [Bi-interval graph] \label{def:biinterval}
        A graph $G$ is a bi-interval graph if it can be expressed as the strong product of two interval graphs $G_1$ and $G_2$. This is denoted as $G=G_1\boxtimes G_2$.
    \end{Definition}
    
    We now define a sub-class of bi-interval graphs known as king's graphs. King's graphs will be used in our proof of the lower bound (see~\autoref{main2bagain}). Note that a path graph is an interval graph. When $X$ and $Y$ are both path graphs, then $X\boxtimes Y$ resembles a chess board (see~\autoref{fig:chess}). The vertices are the squares of the chess board, and there is an edge between two vertices of the graph if and only if a king can go from one square to the other in a single move. Such a bi-interval graph is therefore called a king's graph.

    \begin{Definition} [King's graph] \label{king}
        A graph $G$ is a king's graph if $G=G_1\boxtimes G_2$, where $G_1$ and $G_2$ are path graphs.
    \end{Definition}

	We now describe a greedy algorithm to find shortest paths in bi-interval graphs which is similar in flavour to the greedy algorithm to find shortest paths in interval graphs. Let $G=X\boxtimes Y$ be a bi-interval graph, where $X$ is an interval graph on the X-axis, and $Y$ is an interval graph on the Y-axis. Assume for simplicity that both $X$ and $Y$ have $n$ vertices each. Let $V(X)=V(Y)=[n]$ such that $i\prec {i+1}$ for each $1\leq i<n$ in both $\fI_X$ and $\fI_Y$. Thus, $X\boxtimes Y$ has $n^2$ vertices. Let $T\subseteq [n]\times[n]$ denote the set of terminals. (Since $S=T$, we only use $T$ to denote the set of terminals, and $k=|T|$.)
	
	Suppose we need to construct a shortest path between the vertices $u=(u_x,u_y)$ and $v=(v_x,v_y)$ in $G$. We may assume that $u_x\preceq v_x$ and $u_y\preceq v_y$ (the other cases are similar). Let $\greedyX(u_x,v_x)=(u_x=i_0,i_1,i_2,\ldots,i_p=v_x)$ and $\greedyY(u_y,v_y)=(u_y=j_0,j_1,j_2,\ldots,j_q=v_y)$ be the greedy shortest paths from $u_x$ to $v_x$ in $X$ and from $u_y$ to $v_y$ in $Y$, respectively. Thus, $d_X(u_x,v_x)=p$ and $d_Y(u_y,v_y)=q$. Depending on the values of $p$ and $q$, the path $\greedyG(u,v)$ is defined as follows.
	\begin{align*}
	&\text{If }p\leq q\text{, then }\greedyG(u,v)=\greedyG((u_x,u_y),(v_x,v_y))=\greedyG((i_0,j_0),(i_p,j_q))\\
	&=((i_0,j_0),(i_1,j_1),(i_2,j_2),\ldots,(i_p,j_p),(i_p,j_{p+1}),(i_p,j_{p+2}),\ldots,(i_p,j_{q-1}),(i_p,j_q)).\\
	\\
	&\text{If }p>q\text{, then }\greedyG(u,v)=\greedyG((u_x,u_y),(v_x,v_y))=\greedyG((i_0,j_0),(i_p,j_q))\\
	&=((i_0,j_0),(i_1,j_1),(i_2,j_2),\ldots,(i_q,j_q),(i_{q+1},j_q),(i_{q+2},j_q),\ldots,(i_{p-1},j_q),(i_p,j_q)).
	\end{align*}
	Another way to view $\greedyG(u,v)$ is in the geometric setting. Starting from the rectangle $u$, embark on the greedy shortest paths in both $X$ and $Y$ \emph{simultaneously}, until the current rectangle lies in the same row or column as the destination rectangle, in which case the problem boils down to that of a greedy shortest path in an interval graph.
	
	\begin{Claim}
	The path $\greedyG(u,v)$ is a shortest path from $u$ to $v$ in $G$.
	\end{Claim}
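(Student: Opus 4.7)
The plan is to show two things: (i) $\greedyG(u,v)$ is a legitimate walk in $G$ of length $\max(p,q)$, and (ii) no walk from $u$ to $v$ in $G$ can be shorter than $\max(p,q)$. Together these force $\greedyG(u,v)$ to be a shortest path.

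For (i), I would treat the case $p \le q$ (the other is symmetric). Consecutive vertices split into two regimes. For $0 \le t < p$, the step is from $(i_t,j_t)$ to $(i_{t+1},j_{t+1})$: since $\greedyX(u_x,v_x)$ and $\greedyY(u_y,v_y)$ are paths in $X$ and $Y$, we have $(i_t,i_{t+1}) \in E(X)$ and $(j_t,j_{t+1}) \in E(Y)$, so by clause~3 of the strong product definition, the endpoints are adjacent in $G$. For $p \le t < q$, the step is from $(i_p,j_t)$ to $(i_p,j_{t+1})$: the first coordinate is equal and the second coordinates form an edge of $Y$, so clause~1 gives adjacency in $G$. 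Counting edges gives length exactly $q = \max(p,q)$.

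For (ii), I would use the standard projection argument for strong products. Let $\pi = (w^0,w^1,\ldots,w^\ell)$ be any walk from $u$ to $v$ in $G$, and write $w^t = (x^t,y^t)$. By the definition of $E(G_1\boxtimes G_2)$, for every consecutive pair either $x^t = x^{t+1}$ or $(x^t,x^{t+1}) \in E(X)$; in particular, $x^0,x^1,\ldots,x^\ell$ is a walk in $X$ from $u_x$ to $v_x$ (with possible repetitions), so $d_X(u_x,v_x) \le \ell$. The same argument on the second coordinate yields $d_Y(u_y,v_y) \le \ell$. Hence $\ell \ge \max(p,q)$, and combined with (i) we conclude $d_G(u,v) = \max(p,q)$ and $\greedyG(u,v)$ achieves it.

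I do not foresee a serious obstacle here; the only subtlety is being careful about the two regimes along the path (the diagonal prefix and the straight suffix) and noting that the \emph{strong} product, unlike the Cartesian product, permits simultaneous moves in both coordinates, which is exactly what makes the diagonal segment a single edge rather than two. The correctness of the underlying greedy routine in each interval graph is already invoked in the excerpt, so I would simply cite that to justify that $p = d_X(u_x,v_x)$ and $q = d_Y(u_y,v_y)$.
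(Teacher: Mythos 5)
Your proposal is correct and follows essentially the same route as the paper: a lower bound of $\max\{p,q\}$ via projecting any $u$--$v$ walk onto $X$ and $Y$, combined with the observation that $\greedyG(u,v)$ has length exactly $\max\{p,q\}$. You are somewhat more explicit than the paper in verifying edge-by-edge that the constructed sequence is a valid path in the strong product, but the underlying argument is identical.
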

	
	\begin{proof}
	Note that any path $P_G(u,v)$ in $G$ induces a path $P_X(u_x,v_x)$ in $X$ and a path $P_Y(u_y,v_y)$ in $Y$. Since $P_X(u_x,v_x)\geq\greedyX(u_x,v_x)$ and $P_Y(u_y,v_y)\geq\greedyY(u_y,v_y)$, we have $d_G(u,v)\geq \max\{d_X(u_x,v_x),d_Y(u_y,v_y)\}=\max\{p,q\}$. Since $\greedyG(u,v)$ always produces a path of length $\max\{p,q\}$, this completes the proof.
	\end{proof}

	\subsection{Constructing the Subgraph}
	
	Given a bi-interval graph $G=X\boxtimes Y$, we now describe the construction of $H$, a distance-preserving subgraph of $G$ with $O(k^2)$ branching vertices, thereby proving~\autoref{main2} (a). The idea behind this construction is as follows. $H$ is initially the empty graph on the vertex set of $G$. We add edges to $H$ in a systematic manner, using two algorithms $\textsc{PseudoTerms}()$ and $\textsc{StraightPaths}()$.
	
    We execute these algorithms sequentially. Firstly, $\textsc{PseudoTerms}()$ converts certain non-terminal vertices of the graph to ``pseudo-terminal'' vertices, effectively reducing the problem from one bi-interval graph to a problem on several interval graphs. Secondly, $\textsc{StraightPaths}()$ computes shortest paths within those interval graphs. These algorithms require the concept of cardinal paths. In simple terms, a cardinal path is an ``unbounded'' extension of a greedy path.
	
	\begin{Definition}[Cardinal paths]
	Cardinal paths are greedy paths with a fixed source but no fixed destination. See~\autoref{fig:pseudop} for examples.
	\begin{itemize}
	    \item Given an interval graph $X$ and a vertex $i_0\in V(X)$, the path $\east(i_0,\infty)=(i_0,i_1,i_2,\ldots)$, called the $i_0$-\texttt{east} path, is the greedy path originating at $i_0$ and travelling eastward with no fixed destination. In other words, the path starts with the interval $i_0$ and intervals are greedily added to it until there are no more intervals to add. (That is, until the interval $i$ with the maximum $\righty(i)$ is reached. In our case, $i=n$.) Paths like $\texttt{north}$, $\texttt{east}$, $\texttt{west}$, etc. are similarly defined.
	    \item Given a bi-interval graph $G=X\boxtimes Y$ and a vertex $u=(i_0,j_0)\in V(G)$, the $u$-\textit{\texttt{northeast}} path, denoted by $\neast(u,\infty)$, is defined as $$\neast(u,\infty)=((i_0,j_0),(i_1,j_1),(i_2,j_2),\ldots).$$
	    Here, $\east(i_0,\infty)=(i_0,i_1,i_2,\ldots)$ is the $i_0$-\texttt{east} path, and $\north(j_0,\infty)=(j_0,j_1,j_2,\ldots)$ is the $j_0$-\texttt{north} path. ($\neast$ continues to add vertices to its path as long as both $\north$ and $\east$ continue to add vertices to their respective paths.) The path $\textit{\texttt{northwest}}$ is similarly defined.
	\end{itemize}
	For the rest of the proof, we will only be using $\textit{\texttt{northeast}}$ and $\textit{\texttt{northwest}}$ paths.
	\end{Definition}

	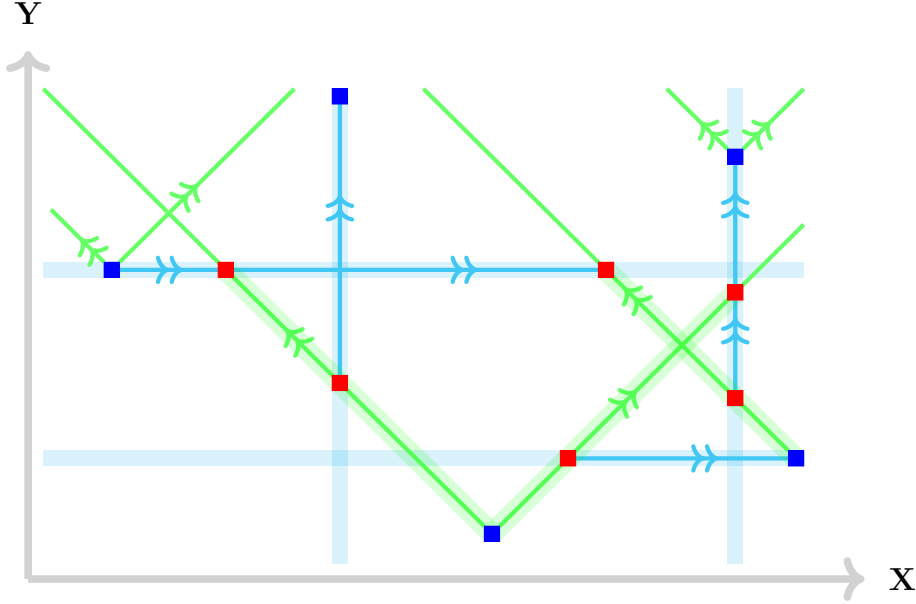
\begin{figure}
    \begin{center}
    \begin{tikzpicture}
    
    \draw [->, gray!35, line width=1mm] (0,0.5) to (11,0.5);
    \node at (11.5,0.5) {\large{\bf X}};
    \draw [->, gray!35, line width=1mm] (0,0.5) to (0,7.5);
    \node at (0,8) {\large{\bf Y}};
    
    \fill[fill=cyan, opacity=0.15] (0.2,2) rectangle (10.2,2.2);
    \fill[fill=cyan, opacity=0.15] (0.2,4.5) rectangle (10.2,4.7);
    \fill[fill=cyan, opacity=0.15] (4,0.7) rectangle (4.2,7);
    \fill[fill=cyan, opacity=0.15] (9.2,0.7) rectangle (9.4,7);
    
    \begin{scope} [green!60, ultra thick, decoration={markings, mark=at position 0.45 with {\arrow{>>}}}] 
        \draw[postaction={decorate}] (6,1.2)--(0.2,7);
        \draw[postaction={decorate}] (6.2,1.2)--(10.2,5.2);
        \draw[postaction={decorate}] (1,4.7)--(0.3,5.4);
        \draw[postaction={decorate}] (1.2,4.7)--(3.5,7);
        \draw[postaction={decorate}] (10,2.2)--(5.2,7);
        \draw[postaction={decorate}] (9.2,6.2)--(8.4,7);
        \draw[postaction={decorate}] (9.4,6.2)--(10.2,7);
    \end{scope}
    
    \begin{scope} [cyan!60, ultra thick, decoration={markings, mark=at position 0.65 with {\arrow{>>}}}] 
        \draw[postaction={decorate}] (1,4.6)--(2.6,4.6);
        \draw[postaction={decorate}] (2.6,4.6)--(7.7,4.6);
        \draw[postaction={decorate}] (7,2.1)--(10.2,2.1);
        \draw[postaction={decorate}] (4.1,3)--(4.1,7);
        \draw[postaction={decorate}] (9.3,2.8)--(9.3,4.6);
        \draw[postaction={decorate}] (9.3,4.6)--(9.3,6.2);
    \end{scope}
    
    \fill [fill=green, opacity=0.15] (6,1)--(6.2,1.2)--(2.7,4.7)--(2.5,4.5)--cycle;
    \fill [fill=green, opacity=0.15] (6.2,1)--(6,1.2)--(9.2,4.4)--(9.4,4.2)--cycle;
    \fill [fill=green, opacity=0.15] (10,2)--(10.2,2.2)--(7.7,4.7)--(7.5,4.5)--cycle;
    
    \draw [blue,fill=blue] (6,1) rectangle (6.2,1.2);
    \draw [red ,fill=red ] (4,3) rectangle (4.2,3.2);
    \draw [blue,fill=blue] (4,6.8) rectangle (4.2,7);
    \draw [red ,fill=red ] (9.2,4.2) rectangle (9.4,4.4);
    \draw [blue,fill=blue] (9.2,6) rectangle (9.4,6.2);
    \draw [red ,fill=red ] (2.5,4.5) rectangle (2.7,4.7);
    \draw [blue,fill=blue] (1,4.5) rectangle (1.2,4.7);
    \draw [red ,fill=red ] (7,2) rectangle (7.2,2.2);
    \draw [blue,fill=blue] (10,2) rectangle (10.2,2.2);
    \draw [red ,fill=red ] (9.2,2.8) rectangle (9.4,3);
    \draw [red ,fill=red ] (7.5,4.5) rectangle (7.7,4.7);
    
    
    \end{tikzpicture}
    \caption{The structure of a distance-preserving subgraph of a bi-interval graph. The blue vertices are the terminals, the green paths denote cardinal \textit{\texttt{northwest}} and \textit{\texttt{northeast}} paths originating at the terminals, the blue paths denote straight (horizontal/vertical) paths, and the red vertices are the pseudo-terminals that lie on the intersection of green and blue paths.}
    \label{fig:pseudop}
    \end{center}
    \end{figure}

	The rationale behind using \textit{\texttt{northeast}} and \textit{\texttt{northwest}} paths, but not using \textit{\texttt{southeast}} or \textit{\texttt{southwest}} paths is because they might lead to anti-parallel paths (see~\autoref{antiparallel}). We are now set to present the algorithm $\textsc{PseudoTerms}()$. The input to this algorithm is a bi-interval graph $G=X\boxtimes Y$ with a set $T$ of terminals, and its output is a subgraph $H$ of $G$, and a set $\fP$ of pseudo-terminals.
	
	\begin{algorithm}
	\begin{algorithmic}[1]
	    \State $V(H) \gets V(G), E(H) \gets \emptyset$
	    \For{each $v\in T(G)$}
	        \State $E(H) \gets E(H)\cup \neast(v,\infty)\cup \nwest(v,\infty)$
	    \EndFor
	    \State $\fP \gets \emptyset$ \Comment{$\fP$ is the set of pseudo-terminals.}
	    \For{each $v=(i,j)\in\{V(G)-T(G)\mid \deg_H(v)>0\}$}
	        \If{$(\exists\ i'$ such that $(i',j)\in T(G))$\textbf{ or }$(\exists\ j'$ such that $(i,j')\in T(G))$}
	            \State $\fP\gets \fP\cup\{v\}$
	        \EndIf
	    \EndFor
	    \State \Return $H,\fP$
    \end{algorithmic}\caption{$\textsc{PseudoTerms}(G)$}
    \end{algorithm}
	
	Here is a brief explanation of $\textsc{PseudoTerms}()$. There are two \textbf{for} loops in the algorithm. The first \textbf{for} loop adds edges to $H$. Thus, vertices untouched by the first \textbf{for} loop are isolated in $H$. The second \textbf{for} loop considers non-terminal vertices touched by $H$, and converts them to pseudo-terminals if they have a terminal vertex in their row or column (see~\autoref{rowcol}).
	
	\begin{Definition} \label{rowcol}
		Let $i_0\in[n],j_0\in[n]$.
		\begin{itemize}
		    \item We say that $V(i_0,*)=\{(i_0,j)\mid j\in[n]\}$ is the set of vertices in the column $i_0$ (or $i_0$'s column), and $V(*,j_0)=\{(i,j_0)\mid i\in[n]\}$ is the set of vertices in the row $j_0$ (or $j_0$'s row).
		    \item $G(i_0,*)$ is the subgraph of $G$ induced by the vertex set $V(i_0,*)$, and $G(*,j_0)$ is the subgraph of $G$ induced by the vertex set $V(*,j_0)$.
		\end{itemize}
		Note that $G(i_0,*)$ and $G(*,j_0)$ are interval graphs.
	\end{Definition}
	
	Before we proceed to the $\textsc{StraightPaths}()$ algorithm, let us calculate the number of branching vertices in $H$.
	
	\begin{Claim}
	$H$ has $O(k^2)$ branching vertices.
	\end{Claim}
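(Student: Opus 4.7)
The plan is to exploit the tree-like merge structure that the cardinal paths produce, together with the fact that northeast and northwest cardinal paths cross each other at most once. By construction, the edge set of $H$ is the union of the $2k$ cardinal paths $\neast(v,\infty)$ and $\nwest(v,\infty)$ taken over $v \in T(G)$, so any branching vertex of $H$ must be a vertex where at least three edges from this union meet. I will classify such vertices into three types and bound each separately: (i) \emph{NE-merges}, at which two distinct northeast cardinal paths first coincide; (ii) \emph{NW-merges}, the analogous event for northwest paths; and (iii) \emph{NE--NW crossings}, at which a northeast path and a northwest path share a vertex.

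For type (i), I would invoke the structural fact (recalled in the informal discussion preceding this claim) that once two northeast cardinal paths meet they proceed in lockstep forever after; equivalently, orienting each edge of $\bigcup_v \neast(v,\infty)$ in the direction of motion produces a digraph in which every vertex has out-degree at most $1$, i.e.\ a reverse forest rooted at the ``terminal'' northeast vertices of $G$. In such a reverse forest with the $k$ terminals as leaves, the number of vertices of in-degree $\ge 2$ is at most $k-1$. The same argument, by symmetry, yields at most $k-1$ branching vertices of type (ii). For type (iii), I will use the fact (also recalled in the discussion) that any particular northeast cardinal path and any particular northwest cardinal path share at most one vertex, since they diverge immediately after meeting and never meet again. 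Because there are $k$ northeast paths and $k$ northwest paths, this gives at most $k^2$ crossings.

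Adding the three bounds yields $(k-1)+(k-1)+k^2 = O(k^2)$ branching vertices in $H$. The only nontrivial step is justifying the three ``meeting'' properties of cardinal paths; since these follow directly from the greedy characterization of shortest paths in an interval graph (two greedy paths that ever overlap are thereafter identical, and a northeast greedy trajectory and a northwest greedy trajectory cannot reenter the same row or column once separated), I would expect to verify them with a short argument using the definition of $\east$, $\west$ and $\north$. With those three facts in hand, the bookkeeping above completes the proof; the main obstacle is simply making the ``stay together'' and ``cross at most once'' statements precise and checking that the branching vertex count is not inflated when several paths merge at a common vertex, which is absorbed by the reverse-forest bound.
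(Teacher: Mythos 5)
Your proof is correct and rests on the same two structural facts the paper uses --- greedy cardinal paths travelling in the same direction move in unison once they meet, and a \texttt{northeast} path meets a \texttt{northwest} path at most once --- so it is essentially the paper's argument: the paper simply charges at most two branching vertices to each of the $O(k^2)$ pairs among the $2k$ cardinal paths, whereas you additionally sharpen the same-direction contribution to $O(k)$ via the in-forest bound. Either bookkeeping yields $O(k^2)$.
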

	\begin{proof}
	Each terminal contributes one \textit{\texttt{northwest}} and one \textit{\texttt{northeast}} path, so that there are $2k$ paths in all. Since these paths are greedy and no two of them are anti-parallel, there is a unique path between any two fixed vertices. Thus each pair of paths can branch in at most two vertices, and since there are $O(k^2)$ pairs of paths, $H$ has $O(k^2)$ branching vertices.
	\end{proof}
	
	\begin{Claim} \label{k2k}
	$G(i_0,*)$ has at most $k$ pseudo-terminals for each $i_0\in [n]$, and $G(*,j_0)$ has at most $2k$ pseudo-terminals for each $j_0\in [n]$.
	\end{Claim}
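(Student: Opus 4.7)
The plan is to bound, for each column $i_0$ (respectively, each row $j_0$), the number of non-terminal vertices in it that are touched by some edge of $H$, since by construction every pseudo-terminal lies in that set. First I would observe that after $\textsc{PseudoTerms}(G)$, we have
\[
E(H)=\bigcup_{t\in T(G)}\Big(\neast(t,\infty)\cup\nwest(t,\infty)\Big),
\]
so every $v$ with $\deg_H(v)>0$ lies on a \textit{\texttt{northeast}} or \textit{\texttt{northwest}} path emanating from some terminal.

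The key monotonicity observation is that along $\neast(t,\infty)$ both the $x$- and $y$-coordinates are strictly increasing, whereas along $\nwest(t,\infty)$ the $x$-coordinates are strictly decreasing while the $y$-coordinates are still strictly increasing. This already gives that any single NE or NW path meets a given column at most once and a given row at most once.

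For the column bound, I would argue that for a terminal $t=(t_x,t_y)$, the path $\neast(t,\infty)$ reaches column $i_0$ only if $t_x\le i_0$, while $\nwest(t,\infty)$ reaches column $i_0$ only if $t_x\ge i_0$. These conditions are mutually exclusive except in the trivial case $t_x=i_0$, and in that case both paths contribute only the single vertex $t$ itself to $V(i_0,*)$ (the first step of each path already leaves column $i_0$). Hence each terminal contributes at most one vertex of $V(i_0,*)$ with positive degree in $H$, yielding at most $k$ such vertices in total and therefore at most $k$ pseudo-terminals in $G(i_0,*)$.

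For the row bound the situation is asymmetric and this is really the only subtle point of the claim. Both NE and NW paths out of $t$ head \emph{upward}, so for any row $j_0$ with $t_y\le j_0$ both paths can cross row $j_0$ (at possibly distinct $x$-coordinates). Each terminal therefore contributes at most two vertices of $V(*,j_0)$ with positive degree in $H$, and summing over the $k$ terminals gives the bound of $2k$ pseudo-terminals in $G(*,j_0)$. All the heavy lifting is the directional bookkeeping; once the NE/NW monotonicities are written down the counting is immediate.
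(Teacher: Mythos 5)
Your proof is correct and follows essentially the same route as the paper, which simply asserts the two intersection bounds $|V(i_0,*)\cap(\neast(t,\infty)\cup\nwest(t,\infty))|\leq 1$ and $|V(*,j_0)\cap(\neast(t,\infty)\cup\nwest(t,\infty))|\leq 2$ per terminal and sums over the $k$ terminals. You have merely spelled out the coordinate-monotonicity bookkeeping that justifies those two inequalities, which the paper leaves implicit.
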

	\begin{proof}
	Fix $i_0\in[n],j_0\in[n]$ and a terminal $t\in T$. Note that
	\begin{align*}
	|V(i_0,*)\cap(\neast(t,\infty)\cup\nwest(t,\infty))|\leq 1\\
	|V(*,j_0)\cap(\neast(t,\infty)\cup\nwest(t,\infty))|\leq 2
	\end{align*}
	
	In other words, the set of $t$-\textit{\texttt{northwest}} and $t$-\textit{\texttt{northeast}} paths have at most one vertex in common with a fixed column $i$, and at most two vertices in common with a fixed row $j$. Since there are $k$ terminals, this completes the proof.
	\end{proof}
	
	\subsection{Adding Straight Paths}
	 
	The two algorithms $\textsc{PseudoTerms}()$ and $\textsc{StraightPaths}()$ work with each other as follows. For every two terminals $t_i$ and $t_j$, a shortest path between them $P(t_i,t_j)$ is achieved by a combination of two paths $P(t_i,p_{i,j})$ and $P(p_{i,j},t_j)$, where $p_{i,j}\in \fP$ is a pseudo-terminal. Paths of the form $P(t_i,p_{i,j})$ are constructed by $\textsc{PseudoTerms}()$ and paths of the form $P(p_{i,j},t_j)$ are constructed by $\textsc{StraightPaths}()$.
	 
	Once $\textsc{PseudoTerms}()$ has converted the required non-terminal vertices to pseudo-terminals, we need only concern ourselves with graphs of this form. Straight paths are paths that travel either horizontally from left to right in a fixed row, or vertically from bottom to top in a fixed column. We present the algorithm $\textsc{StraightPaths}()$ below.
	
    \begin{algorithm}
	\begin{algorithmic}[1]
	    \For{each $i\in [n]$}
	        \If{$T(i,*)\neq\emptyset$} \Comment{$T(i,*)\triangleq T(G)\cap G(i,*), \fP(i,*)\triangleq\fP\cap G(i,*)$.}
	            \State $H\gets H\cup\textsc{IntervalPaths}(G(i,*),T(i,*), \fP(i,*)\cup T(i,*))$
	        \EndIf
	    \EndFor
	    \For{each $j\in [n]$}
	        \If{$T(*,j)\neq\emptyset$} \Comment{$T(*,j)\triangleq T(G)\cap G(*,j), \fP(*,j)\triangleq\fP\cap G(*,j)$.}
	            \State $H \gets H\cup\textsc{IntervalPaths}(G(*,j),T(*,j), \fP(*,j)\cup T(*,j))$
	        \EndIf
	    \EndFor
	    \State \Return $H$
    \end{algorithmic}\caption{$\textsc{StraightPaths}(G,H,\fP)$}
    \end{algorithm}
    
    Note that $\textsc{StraightPaths}()$ crucially uses the algorithm $\textsc{IntervalPaths}()$ as a subroutine. $\textsc{IntervalPaths}(G,S,T)$ takes an interval graph $G$ with source and target sets $S\subseteq T\subseteq V(G)$ as input, and returns a distance-preserving subgraph of $G$ with $O(|S|\cdot|T|)$ branching vertices. \autoref{intpaths} (which we will prove shortly) shows that this can be done in polynomial time. First let us complete the proof of~\autoref{main2} (a) assuming~\autoref{intpaths}.
    
    \begin{proof}[Proof of~\autoref{main2} (a)] It has already been established that the final subgraph $H$ returned by the algorithm $\textsc{StraightPaths}()$ is distance-preserving for $G$. We will now prove that $H$ has $O(k^2)$ branching vertices.
    
    Fix a column $i\in[n]$. Let $\row_i=|T(i,*)|, \pseudo_i=|\fP(i,*)|$. In other words, $\row_i$ is the number of terminals in column $i$, and $\pseudo_i$ is the number of pseudo-terminals in row $i$. Using the union bound, we get $|\fP(i,*)\cup T(i,*)|\leq \row_i+\pseudo_i$. Thus, $\textsc{IntervalPaths}(G(i,*),T(i,*), \fP(i,*)\cup T(i,*))$ returns a distance-preserving subgraph of $G(i,*)$ with $c(\row_i(\row_i+\pseudo_i))$ branching vertices (\autoref{intpaths}), where $c$ is a constant. Therefore, the total number of branching vertices at the completion of step 5 of $\textsc{StraightPaths}()$ is given by
    \begin{align*}
        \sum_{i=1}^{n}c(\row_i(\row_i+\pseudo_i))&=c\sum_{i=1}^{n}{\row_i}^2+c\sum_{i=1}^{n}\row_i\pseudo_i\\
        &\leq c\left[\sum_{i=1}^{n}\row_i\right]^2+c\sum_{i=1}^{n}\row_i k\qquad &&\because\,\pseudo_i\leq k\ (\autoref{k2k})\\
        &=c\cdot k^2+ck\cdot k &&\because\,\sum_{i=1}^{n}\row_i=k\\
        &=2ck^2.
    \end{align*}
    Thus, $H$ has an additional $O(k^2)$ branching vertices at the completion of step 5 of the algorithm $\textsc{StraightPaths}()$. Similarly, $H$ has an additional $O(k^2)$ branching vertices at the completion of step 10 of $\textsc{StraightPaths}()$.
    
    Finally, we need to account for branching vertices created by the intersections of paths added in step 3 and paths added in step 8. Note that $V(i,*)$ and $V(*,j)$ have exactly one vertex in common for every $(i,j)\in [n]\times [n]$. Since there are at most $k$ columns with $T(i,*)\neq\emptyset$ and at most $k$ rows with $T(*,j)\neq\emptyset$, the number of branching vertices added is at most $k^2$. This completes the proof.
    \end{proof}
    
    Let us now explain the algorithm $\textsc{IntervalPaths}()$. This is just a one-step algorithm. $\textsc{IntervalPaths}(I,S,T)$ takes an interval graph $I$ with source and target sets $S,T$ as inputs, and returns $J$, which is simply a union of greedy shortest paths, as defined below. $$J=\cup\left\{\greedyI(u,v)\mid (u,v)\in (S\times T)\cup ((T-S)\times S)\right\}$$ Clearly, $J$ can be computed in polynomial time. We will now prove that $J$ is distance-preserving for $I$ (that is, $d_J(u,v)=d_I(u,v)$ for all $(u,v)\in S\times T$) and $J$ has $O(|S|\cdot|T|)$ branching vertices.
    
    \begin{Theorem} \label{intpaths}
    Let $p$ and $q$ be positive integers such that $p\leq q$. Given an interval graph $I$ with source and target sets $S\subseteq T\subseteq V(G)$ such that $|S|=p,|T|=q$, a distance-preserving subgraph of $I$ with $O(pq)$ branching vertices can be computed in polynomial time.
    \end{Theorem}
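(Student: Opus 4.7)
My plan is to establish the three claims of the theorem for the subgraph $J$ returned by $\textsc{IntervalPaths}(I,S,T)$: polynomial-time computability, distance preservation for every pair in $S \times T$, and at most $O(pq)$ branching vertices. Polynomial time is immediate, since $J$ is a union of $O(pq)$ greedy paths in an interval graph. Distance preservation also follows quickly from the construction: for $(s,t) \in S \times T$, either $s \prec t$ and $\greedyI(s,t) \in J$ directly, or $s \succ t$ and the pair $(t,s)$ lies in $S \times T$ (if $t \in S$) or $(T-S) \times S$ (if $t \in T-S$), so $\greedyI(t,s) \in J$ supplies a shortest $s$-$t$ path.

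The interesting task is the branching-vertex count. I would split $J$ into $J_E$ (edges from east-going greedy paths $\greedyI(u,v)$ with $u \prec v$) and $J_W$ (west-going), and bound the branching vertices of each by $O(pq)$; the bound for $J_E$ will transfer to $J_W$ by symmetry. The key structural property, which I would derive directly from the greedy rule, is that the east-greedy successor of a vertex is a deterministic function of that vertex alone, so any two infinite east backbones $\east(u_1,\infty)$ and $\east(u_2,\infty)$ that share a vertex agree from that point onward. Each path $\greedyI(u,v) \in J_E$ is therefore a prefix of $\east(u,\infty)$ followed by at most one \emph{terminal edge} from the last backbone vertex before $v$ to $v$ itself.

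With the edges of $J_E$ oriented eastward, every branching vertex has in-degree $\geq 2$ or out-degree $\geq 2$. Backbone vertices of in-degree $\geq 2$ are \emph{merge vertices} where two previously distinct backbones first coincide; since each merge reduces the number of distinct live backbones by at least one and there are at most $|T| = q$ east-sources in $J_E$, there are $O(q)$ merge vertices in total. Every other branching contribution -- an out-degree $\geq 2$ at a backbone vertex (a split caused by a terminal edge leaving the backbone) or an in-degree $\geq 2$ at a target off the backbone (multiple terminal edges landing on it) -- will be charged to a terminal edge of some east path. Since the number of east paths is at most $|S||T| + |T-S||S| = O(pq)$ and each path supplies at most one terminal edge, these branching vertices number $O(pq)$.

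The main obstacle I anticipate is verifying rigorously that merges and terminal-edge contributions exhaust the branching vertices of $J_E$, and showing that the charging does not over-count vertices that simultaneously host a merge and a split, or are reached by many unmerged backbones via distinct terminal edges. Once the counting is justified and combined with the symmetric bound for $J_W$, the total is $O(pq)$, completing the proof.
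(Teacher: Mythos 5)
Your argument is, at its core, the same as the paper's. The paper realizes every required path as an \emph{east-going} greedy path, takes as backbone the tree $J'=\bigcup_{i=1}^{q-1}\greedyI(t_i,t_q)$ of greedy paths from all targets to the rightmost target $t_q$ (a tree with at most $q$ leaves and hence at most $q-2$ branching vertices --- your merge-vertex count), and observes that each remaining pair is served by a prefix of a backbone path plus a single extra edge $(b_{i,j},t_j)$, with each of the $O(pq)$ such edges contributing at most two branching vertices --- your terminal-edge charging. So your decomposition ``prefix of a deterministic east backbone plus one terminal edge,'' the $O(q)$ bound on merges, and the $O(pq)$ charge to terminal edges reproduce the paper's proof almost verbatim.

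The one step you should not leave as stated is ``the bound for $J_E$ transfers to $J_W$ by symmetry,'' followed implicitly by adding the two bounds. Bounding the branching vertices of $J_E$ and $J_W$ separately does \emph{not} bound those of $J_E\cup J_W$: a vertex of degree two in each piece can have degree four in the union, and the paper's \autoref{antiparallel} exhibits exactly this failure mode --- a single east path and a single west path, each with zero branching vertices, whose union has $\Omega(n)$ branching vertices, unbounded in the number of terminals. The construction sidesteps this by ensuring $J_W=\emptyset$: the index set $(S\times T)\cup((T-S)\times S)$ is chosen precisely so that every unordered source--target pair $\{s,t\}$ occurs at least once with its western element in the first coordinate, so every path in $J$ may be taken to be east-going. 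You in fact use this observation yourself when you argue distance preservation for $s\succ t$ via $\greedyI(t,s)$; make explicit that consequently there are no west-going paths to account for, rather than invoking a symmetry that would not survive taking the union.
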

	\begin{proof}
	We will construct $J$ systematically, in a way that it will be convenient to count the number of branching vertices in it. Let $T=\{t_1,t_2,\ldots,t_q\}$ (note that $p$ of these are source vertices) such that $t_i<t_{i+1}$ for every $1\leq i<q$. Let $$J'=\displaystyle\bigcup_{i=1}^{q-1}\greedyI(t_i,t_q).$$ Thus, $J'$ can be thought of as a shortest path tree rooted at $t_q$, but constructed in the reverse direction. Since $J'$ is a tree with at most $q$ leaves, it has at most $q-2$ branching vertices~\footnote{Every tree $T$ with exactly $q$ vertices of degree $1$ has at most $q-2$ vertices of degree at least $3$.}. Also, $J'$ preserves distances from all vertices of $T$ to $t_q$. We will now add $O(pq)$ more edges to $J'$ to produce a $J$ which preserves distances between all pairs of vertices in $S\times T$. 
	
	Fix $1\leq i<j<q$. Let $b_{i,j}$ be the last vertex on the greedy path $\greedyI(t_i,t_j)$ just before $t_j$. Then, $\greedyI(t_i,t_j)=\greedyI(t_i,b_{i,j})\cup (b_{i,j},t_j)$. Note that $\greedyI(t_i,b_{i,j})\subseteq\greedyI(t_i,t_q)$ (since they are greedy paths from the same source, and $b_{i,j}\leq t_q$). Thus, the addition of a single edge $(b_{i,j},t_j)$ to $J'$ preserves the distance between $t_i$ and $t_j$. Let $$J=J'\cup\left[\displaystyle\bigcup_{\substack{1\leq i<j<q,\\t_i\in S,t_j\in T}}(b_{i,j},t_j)\right]\cup\left[\displaystyle\bigcup_{\substack{1\leq i<j<q,\\t_i\in T,t_j\in S}}(b_{i,j},t_j)\right].$$ Thus, $J$ has at most $|S|\cdot|T|$ more vertices than $J'$, and since each edge can contribute at most two additional branching vertices, $J$ has at most $(q-2)+2pq=O(pq)$ branching vertices in all.
	\end{proof}

	\subsection{Proof of the Lower Bound for Bi-interval Graphs}

	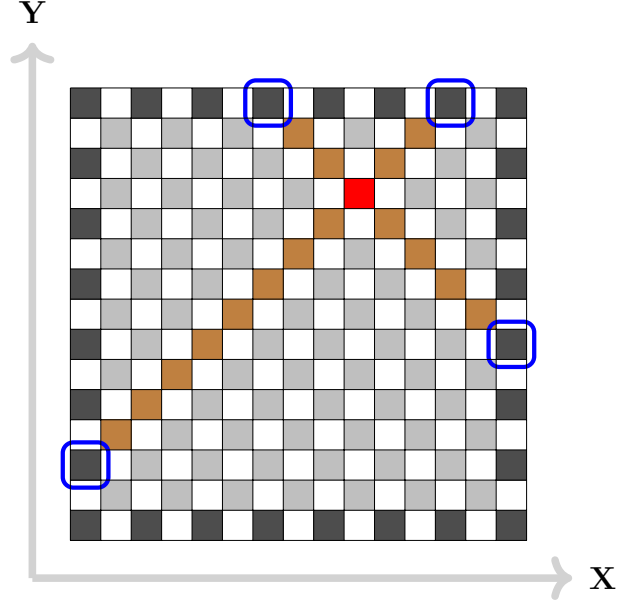
\begin{figure}
	\begin{center}
	\begin{tikzpicture}
    
	\foreach \x in {0,2,...,14}
	    \foreach \y in {0,2,...,14}
	        \fill[fill=lightgray] (0.4*\x,0.4*\y) rectangle +(0.4,0.4);
	\foreach \x in {1,3,...,13}
	    \foreach \y in {1,3,...,13}
	        \fill[fill=lightgray] (0.4*\x,0.4*\y) rectangle +(0.4,0.4);
	\foreach \i in {0,1,...,12}
	    \fill[fill=brown, opacity=1] (0.4*\i,0.4*\i+0.8) rectangle +(0.4,0.4);
    \foreach \i in {4,5,...,12}
	    \fill[fill=brown, opacity=1] (0.4*\i+0.8,-0.4*\i+7.2) rectangle +(0.4,0.4);
	    
    \foreach \x in {0,2,...,12}
	{
	    \fill[fill=black!70] (0.4*\x,0) rectangle +(0.4,0.4);
	    \fill[fill=black!70] (0,0.4*\x) rectangle +(0.4,0.4);
	    \fill[fill=black!70] (0.8+0.4*\x,5.6) rectangle +(0.4,0.4);
	    \fill[fill=black!70] (5.6,0.8+0.4*\x) rectangle +(0.4,0.4);
	}
	
	\fill[fill=black!70] (0,5.6) rectangle +(0.4,0.4);
	\fill[fill=black!70] (5.6,0) rectangle +(0.4,0.4);
	    
	\draw[blue, ultra thick, rounded corners] (2.3,5.5) rectangle +(0.6,0.6);
	\draw[blue, ultra thick, rounded corners] (4.7,5.5) rectangle +(0.6,0.6);
	\draw[blue, ultra thick, rounded corners] (-0.1,0.7) rectangle +(0.6,0.6);
	\draw[blue, ultra thick, rounded corners] (5.5,2.3) rectangle +(0.6,0.6);
	
	\fill[red, fill=red, opacity=1] (0.4*9,0.4*11) rectangle +(0.4,0.4);
	
	\draw[step=0.4, ultra thin] (0,0) grid (6,6);
	
	\draw [->, gray!35, line width=1mm] (-0.5,-0.5) to (6.6,-0.5);
    \node at (7,-0.5) {\large{\bf X}};
    \draw [->, gray!35, line width=1mm] (-0.5,-0.5) to (-0.5,6.6);
    \node at (-0.5,7) {\large{\bf Y}};
	
	\end{tikzpicture}\caption{An instance of $\Gdiag$. The black vertices that lie on the boundary are the terminals. Every internal black vertex (for example, the one coloured red here) lies on two \emph{unique} diagonal shortest paths in $\Gdiag$.}\label{fig:chess}
	\end{center}
	\end{figure}

	In this section, we exhibit a bi-interval graph $\Gdiag$ for which every distance-preserving subgraph has $\Omega(k^2)$ branching vertices, thereby proving~\autoref{main2} (b), restated below.
	
	\begin{Theorem} \label{main2bagain}
	For every $k\geq 4$, there exists a bi-interval graph $\Gdiag$ on $k$ terminals such that every distance-preserving subgraph of $\Gdiag$ has $\Omega(k^2)$ branching vertices.
	\end{Theorem}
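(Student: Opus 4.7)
The plan is to take $\Gdiag$ to be the king's graph on the $n\times n$ grid with $n=\lceil k/4\rceil+1$, and to designate every one of the $4(n-1)$ boundary vertices as a terminal (adjusting by at most three terminals to hit exactly $k$). Since every path graph is an interval graph, $\Gdiag=P_n\boxtimes P_n$ is a bi-interval graph by~\autoref{king}. I will argue that every one of the $(n-2)^2=\Omega(k^2)$ \emph{interior} vertices is forced to be a branching vertex in any distance-preserving subgraph $H$ of $\Gdiag$.

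The engine of the argument is the observation that in a king's graph one has $d_{\Gdiag}((x_1,y_1),(x_2,y_2))=\max(|x_1-x_2|,|y_1-y_2|)$, and whenever $|x_1-x_2|=|y_1-y_2|=d$ every shortest path must consist of exactly $d$ diagonal steps in the same direction, so the shortest path is \emph{unique}. Fix an interior vertex $v=(i,j)$ with $2\le i,j\le n-1$. Let $D^{\nearrow}_v$ be the maximal NE-diagonal through $v$, running from $(i-a,j-a)$ to $(i+b,j+b)$ with $a=\min(i-1,j-1)$ and $b=\min(n-i,n-j)$, and let $D^{\nwarrow}_v$ be the analogous maximal NW-diagonal through $v$. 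By construction both endpoints of each of $D^{\nearrow}_v$ and $D^{\nwarrow}_v$ lie on the boundary (hence are terminals), and the endpoints of each diagonal satisfy $|\Delta x|=|\Delta y|$, so each is the unique shortest path between its pair of endpoint terminals.

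Because each of these diagonals is the unique shortest path between a pair of terminals, any distance-preserving subgraph $H$ must contain both $D^{\nearrow}_v$ and $D^{\nwarrow}_v$ in full. In particular $H$ contains all four edges from $v$ to its diagonal neighbors $(i\pm 1,j\pm 1)$, which are four distinct vertices since $v$ is interior, so $\deg_H(v)\ge 4$ and $v$ is a branching vertex. Summing over the $(n-2)^2=\Omega(k^2)$ interior vertices then gives the required lower bound. The only step that needs a line or two of care is the boundary-case verification that for every interior $v$ both maximal diagonals actually reach boundary vertices, and that the four diagonal neighbors of $v$ all lie in the grid; this is a routine case analysis on which of $i-1,\,j-1,\,n-i,\,n-j$ is smallest, and it is precisely where the restriction $2\le i,j\le n-1$ is used.
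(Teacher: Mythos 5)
Your proposal is correct and is essentially the paper's own argument: a king's graph with boundary terminals, using the fact that when two terminals lie on a common diagonal the diagonal is the unique shortest path between them, which forces every interior vertex to have degree at least $4$. The only cosmetic difference is that you take all boundary vertices as terminals while the paper takes only the black ones (restricting to black diagonals), and you spell out the distance formula and uniqueness argument in more detail.
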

	
	It is convenient to follow~\autoref{fig:chess} while reading this proof. Let $\Gdiag$ be a $k \times k$ king's graph (see~\autoref{king}). Colour the squares black and white as the squares of a chess board are ordinarily coloured. Declare the black squares on the boundary to be the terminals. Now, the crucial observation is that for any two terminals that lie on a common (black) diagonal, the diagonal itself is the \emph{unique} shortest path between them. Thus, any distance-preserving subgraph of $\Gdiag$ must include all squares that lie on the intersection of two black diagonals. In particular, all black vertices \emph{not} on the boundary have degree at least $4$ in any distance-preserving subgraph of $\Gdiag$, and are thus branching vertices.
	
    Hence, $\Gdiag$ is a bi-interval graph with $O(k)$ terminals for which every distance-preserving subgraph has $\Omega(k^2)$ branching vertices. This completes the proof of~\autoref{main2} (b).
    
    \subsection{Anti-parallel Greedy Paths in Bi-interval Graphs}
	
	Two paths in bi-interval graphs are called anti-parallel if they are greedy shortest paths in opposite directions. As we shall see shortly, the inclusion of anti-parallel paths can be disastrous to prove any reasonable upper bound for distance-preserving subgraphs of bi-interval graphs.
	
	More specifically, we will show that any algorithm that uses only anti-parallel paths to compute a distance-preserving subgraph of a bi-interval graph cannot achieve an $O(k^2)$ upper bound on the number of branching vertices. In fact, even including just one pair of anti-parallel paths can lead to a large (unbounded in $k$) number of branching vertices. Although we prove this  for \textit{\texttt{east}} versus \textit{\texttt{west}} paths in interval graphs, the proof can be easily extended to \textit{\texttt{northeast}} versus \textit{\texttt{southwest}} (or \textit{\texttt{northwest}} versus \textit{\texttt{southeast}}) paths in bi-interval graphs as well.
	
	\begin{Theorem} \label{antiparallel}
	There exists an interval graph $X$ on $O(n)$ vertices with two terminals $u$ and $v$ such that the number of branching vertices in  is $\Omega(n)$.
	\end{Theorem}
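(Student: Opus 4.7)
My plan is to construct a ``zig-zag'' interval graph on $O(n)$ vertices in which the greedy path from $u$ to $v$ (east-going, selecting the maximum right endpoint at each step) and the greedy path from $v$ to $u$ (west-going, selecting the minimum left endpoint) are distinct shortest paths that share $\Omega(n)$ common interior vertices; each shared interior vertex then acquires degree $4$ in their union and thus becomes a branching vertex.

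Concretely, I would fix $m=\Theta(n)$ and introduce three families of intervals. First, ``pillars'' $s_i := [10i,\,10i+2]$ for $0 \leq i \leq m$. Second, for each $0 \leq i < m$, a pair of ``bridges'' $a_i := [10i+1.5,\,10i+11]$ and $b_i := [10i+1,\,10i+10.5]$, each of which overlaps $s_i$ and $s_{i+1}$ but no other pillar. By design, $a_i$ strictly out-reaches $b_i$ to the right while $b_i$ strictly out-reaches $a_i$ to the left; moreover, no two bridges of the form $a_i,a_{i+1}$ (or $a_i,b_{i+1}$, or $b_i,b_{i+1}$) overlap, so no bridge can ``short-circuit'' to a pillar further away. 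Declare $u:=s_0$ and $v:=s_m$ to be the two terminals.

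Next, I would verify by a direct local inspection that the greedy east path from $u$ is $(s_0, a_0, s_1, a_1, \ldots, a_{m-1}, s_m)$: at a pillar $s_i$, the two neighbors of largest right endpoint are $a_i$ and $b_i$, of which $a_i$ wins; at a bridge $a_i$, the neighbor of largest right endpoint is the pillar $s_{i+1}$. A symmetric calculation on left endpoints shows that the greedy west path from $v$ is $(s_m, b_{m-1}, s_{m-1}, \ldots, b_0, s_0)$. Both paths have length $2m$ and so are in fact shortest paths between $u$ and $v$.

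Finally, in the union of these two anti-parallel paths, each interior pillar $s_i$ with $1 \leq i \leq m-1$ is incident to the four distinct vertices $a_{i-1}, a_i, b_{i-1}, b_i$ (two contributed by each path), so it has degree $4$ and is a branching vertex; every bridge $a_i$ or $b_i$ has degree $2$, as do the endpoints $u$ and $v$. This produces $m-1 = \Omega(n)$ branching vertices in an interval graph with $3m+1=O(n)$ vertices and just two terminals, as required. The main obstacle is arithmetic rather than conceptual: one has to pin down the endpoints so that every greedy choice is deterministic and no pair of non-adjacent bridges accidentally overlaps; the coordinates above (or any small perturbation of them) suffice.
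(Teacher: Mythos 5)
Your construction is correct and realizes essentially the same mechanism as the paper's own proof: two anti-parallel greedy paths that interleave through a shared backbone of $\Omega(n)$ vertices, each of which picks up two distinct neighbours from each path and hence has degree $4$ in the union (the paper uses two families of slightly offset short intervals as the bridge families and unit-length connectors as the shared backbone, together with a few extra small terminal intervals whose only role is to force both cardinal paths to appear). The one point to tidy up is the one you already flag: with the stated coordinates $a_i$ and $b_{i+1}$ touch at the single point $10i+11$ (and $a_{i-1}$ touches $b_i$ at $10i+1$), so a small perturbation of the endpoints of the $b_j$ is indeed needed to keep every greedy choice as claimed.
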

	
	\begin{proof} Let $n$ be a multiple of $4$. Fix $\varepsilon=0.01$ and $\delta=0.1$. Let the set of intervals of $X$ be $\cI=\displaystyle\bigcup_{1\leq k\leq 6}\cI_k$ and the set of terminals be $T=\displaystyle\bigcup_{4\leq k\leq 6}\cI_k$.
	\begin{align*}
	\cI_1&=\{(i-\delta+\varepsilon,i+\delta+\varepsilon)\mid 1\leq i\leq n\},
	&\cI_2&=\{(i-\delta-\varepsilon,i+\delta-\varepsilon)\mid 1\leq i\leq n\},\\
	\cI_3&=\{(i+\varepsilon,i+1-\varepsilon)\mid 1\leq i\leq n-1\},
	&\cI_4&=\{(i+\varepsilon,i+1-\varepsilon)\mid i=0,n\},\\
	\cI_5&=\{(i+\delta,i+2\delta)\mid i=n/4,n/2,3n/4\},
	&\cI_6&=\{(i+1-\delta,i+1-2\delta)\mid i=n/4,n/2,3n/4\}.
	\end{align*}
	Let $u=(\varepsilon,1-\varepsilon)\in\cI_4$ and $v=(n+\varepsilon,n+1-\varepsilon)\in\cI_4$. If we are restricting ourselves to only using greedy paths, then it is easy to see that a $u$-\textit{\texttt{east}} greedy path is required to cover terminals of $\cI_5$ from $u$, and a $v$-\textit{\texttt{west}} greedy path is required to cover terminals of $\cI_6$ from $v$.
	
	Note that a $u$-\textit{\texttt{east}} path uses intervals from $\cI_2$ and $\cI_3$ alternately, and a $v$-\textit{\texttt{west}} path uses intervals from $\cI_1$ and $\cI_3$ alternately. Thus, every vertex of $\cI_3$ has degree $4$, and the number of branching vertices in $\east(u,\infty)\cup\west(v,\infty)$ is $n-1$. Since $|T|=8$, the number of branching vertices cannot be upper bounded in terms of the number of terminals.
	\end{proof}

	
	

	\section{Conclusion}
	
	Finding an all-pairs distance-preserving subgraph with the minimum number of branching vertices is $\NP$-complete for general graphs, and a single-source distance-preserving subgraph is in $\P$ for interval graphs. However, the question remains: what about all-pairs distance-preserving subgraphs for interval graphs? Is it also in $\P$? Is it $\NP$-complete? If so, is it fixed-parameter tractable (with parameter $k$, the number of terminals)? This question is wide open and any progress toward solving it would be interesting.

	
	

	\subsection*{Acknowledgments}
	
	We would like to thank Suhail Sherif for the ``unique diagonal paths'' idea (\autoref{fig:chess}) that led to the lower bound for bi-interval graphs. 

	
	

	\bibliographystyle{alpha}
	\bibliography{ThirdPaper.bib}

	
	


\end{document}